\setlist{itemsep=0pt,parsep=0pt}             
\newcommand{\note}[2]{
    \ifthenelse{\equal{\showComments}{yes}}{\textcolor{#1}{#2}}{}
}
\date{}
\title{Parallel Streaming Random Sampling}
\author{
  Kanat Tangwongsan\thanks{
    CS Program, Mahidol University International College,~\texttt{kanat.tan@mahidol.edu}}
  \and
  Srikanta Tirthapura\thanks{
    Dept. of Electrical and Computer Engineering, Iowa State University,~\texttt{snt@iastate.edu}
  }
}
\DeclareMathOperator*{\argmax}{arg\,max}
\newcommand{\qed}[0]{}
\newcommand{\ltheading}[1]{\subsection{#1}}
\newcommand*{\techniqueName}{$\textsf{R}^3$\xspace}
\newcommand{\whp}{\textbf{whp}\xspace}
\newcommand{\stream}{\mathcal{S}}
\newcommand{\skipp}{\textsc{Skip}}
\newcommand{\sworinf}{\textsc{SWOR-Infwin}}
\newcommand{\sworfixed}{\textsc{SWOR-Fixed-Win}}
\newcommand{\swrinf}{\textsc{SWR-Infwin}}
\newcommand{\hyper}{\mathcal{H}}
\newcommand{\binomial}{\mathcal{B}}
\newcommand{\remove}[1]{}
\newcommand{\oldbkt}{\textsc{oldbkt}}
\newcommand{\newbkt}{\textsc{newbkt}}
\newcommand{\oldsmp}{\textsc{oldsmp}}
\newcommand{\newsmp}{\textsc{newsmp}}
\newcommand*{\Stream}{\mathcal{S}}
\newcommand*{\coinHeads}[0]{\textsf{\textit{H}}}
\newcommand*{\uniformSample}[0]{\lstinline{UniformSample}}
\newcommand*{\coinFlip}[0]{\upshape\lstinline{coin}}
\newcommand*{\topsAlg}{\normalfont\textsc{Priority Sampling}\xspace}
\newcommand*{\simpleRRAlg}[0]{\normalfont\textsc{Simple-RR}}
\newcommand*{\fastSingleRRAlg}[0]{{\normalfont\textsc{Fast-Single-RR}}}
\newcommand*{\fastRRAlg}[0]{{\normalfont\textsc{Fast-RR}}}
\newcommand*{\subsegX}{X_{\bigdot}}
\newcommand*{\createViewAlg}{\textsc{Create-View}\xspace}
\newcommand{\ssequence}[1]{\ensuremath{\langle #1 \rangle}}
\begin{document}

\maketitle

\begin{abstract}
  This paper investigates parallel random sampling from a potentially-unending
  data stream whose elements are revealed in a series of element sequences
  (minibatches). While sampling from a stream was extensively studied
  sequentially, not much has been explored in the parallel context, with prior
  parallel random-sampling algorithms focusing on the static batch model.
  We present parallel algorithms for minibatch-stream sampling in two settings:
  (1) sliding window, which draws samples from a prespecified number of
  most-recently observed elements, and (2) infinite window, which draws samples
  from all the elements received. Our algorithms are computationally and memory
  efficient: their work matches the fastest sequential counterpart, their
  parallel depth is small (polylogarithmic), and their memory usage matches the
  best known.
 
\end{abstract}



\section{Introduction}
Consider a model of data processing where data is revealed to the processor in a
series of element sequences (minibatches) of varying sizes. A minibatch must be
processed soon after it arrives. However, the data is too large for all the
minibatches to be stored within memory, though the current minibatch is
available in memory until it is processed.

Such a minibatch streaming model is a generalization of the traditional data
stream model, where data arrives as a sequence of elements. If each minibatch is
of size $1$, our model reduces to the streaming model. Use of minibatches are
common. For instance, in a \emph{data stream warehousing system}~\cite{JS15},
data is collected for a specified period (such as an hour) into a minibatch and
then ingested, and statistics and properties need to be maintained during the
ingestion. Minibatches may be relatively large, potentially of the order of
Gigabytes or more, and could leverage parallelism (e.g., a distributed memory
cluster or a shared-memory multicore machine) to achieve the desired throughput.
Furthermore, this model matches the needs of modern ``big data'' stream
processing systems such as Apache Spark Streaming~\cite{Zaharia13}, where
newly-arrived data is stored as a distributed data set (an ``RDD'' in Spark)
that is processed in parallel. Queries are posed on all the data received up to
the most recent minibatch.


This paper investigates the foundational aggregation task of random sampling in
the minibatch streaming model. Algorithms in this model observe a (possibly
infinite) sequence of minibatches $B_1,B_2,\ldots, B_t, \ldots$. We consider the
following variants of random sampling, all of which have been well studied in
the context of sequential streaming algorithms. In the \textbf{infinite window}
model, a random sample is chosen from all the minibatches seen so far. Thus,
after observing $B_t$, a random sample is drawn from $\cup_{i=1}^t B_i$. In the
\textbf{sliding window} model with window size $w$, the sample after observing
$B_t$ is chosen from the $w$ most-recent elements. Typically, the window size
$w$ is much larger than a minibatch size. \footnote{One could also consider a
  window to be the $w$ most recent minibatches, and similar techniques are
  expected to work.} In this work, the window size $w$ is provided at query
time, but an upper bound $W$ on $w$ is known beforehand.


We focus on optimizing the work and parallel depth of our parallel algorithms.
This is a significant point of departure from the traditional streaming
algorithms literature, which has mostly focused on optimizing the memory
consumed. In addition, we consider memory to be a scarce resource and design for
scenarios where the size of the stream is very large---and the stream, or even a
sliding window of the stream, does not fit in memory. Hence, this work strives
for parallel computational efficiency in addition to memory efficiency.


\ltheading{Our Contributions}
%
We present parallel random-sampling algorithms for the minibatch streaming model, in both
infinite window and sliding window settings. These algorithms can use the power
of shared-memory parallelism to speedup the processing of a new minibatch as
well as a query for random samples.

\smallskip
\noindent\emph{$\rhd$~Efficient Parallel Algorithms.}
Our algorithms are provably efficient in parallel processing. We analyze them in
the work-depth model, showing (1)~they are work-efficient, i.e., total work
across all processors is of the same order as an efficient sequential algorithm,
and (2)~their parallel depth is logarithmic in the target sample size, which
implies that they can use processors nearly linear in the input size while not
substantially increasing the total work performed. In the infinite-window case,
the algorithm is work-optimal since the total work across all processors matches
a lower bound on work, which we prove in this paper, up to constant factors.
Interestingly, for all our algorithms, the work of the parallel algorithm is
sublinear in the size of the minibatch.

\smallskip

\noindent\emph{$\rhd$~Small Memory.} While the emphasis of this work is on
improving processing time and throughput, our algorithms retain the property of
having a small memory footprint, matching the best sequential algorithms from
prior work.

%
Designing such parallel algorithms requires overcoming several challenges.
Sliding-window sampling is typically implemented with
\topsAlg{}~\cite{BDM02,BOZ09}, whose work performed (per minibatch) is linear in
the size of the minibatch. Parallelizing it reduces depth but does not reduce
work.
Generating skip offsets, \`{a} la Algorithm Z~\cite{Vitt85} (reservoir
sampling), can significantly reduce work but offers no parallelism. Prior algorithms,
such as in~\cite{Vitt85}, seem inherently sequential, since the next location
to sample from is derived as a function of the previously chosen location.
This work introduces a new technique called \techniqueName{} sampling, which
combines \textbf{\underline{r}}eversed \textbf{\underline{r}}eservoir sampling
with \textbf{\underline{r}}ejection sampling. \techniqueName{} sampling is a new
perspective on \topsAlg{} that mimics the sampling distribution of \topsAlg{},
but is simpler and has less computational dependency, making it amendable to
parallelization. To enable parallelism, we draw samples simultaneously from
different areas of the stream using a close approximation of the distribution.
This leads to slight oversampling, which is later corrected by rejection
sampling. We show that all these steps can be implemented in parallel.
In addition, we develop a data layout that permits convenient update and fast
queries. As far as we know, this is the first efficient parallelization of the
popular reservoir-sampling-style algorithms.
\ltheading{Related Work}
Reservoir sampling (attributed to
Waterman) was known since the 1960s. There has been much follow-up work,
including methods for speeding up reservoir sampling by ``skipping past
elements''~\cite{Vitt85}, weighted reservoir sampling~\cite{ES06}, and sampling
over a sliding window~\cite{BOZ09,XTB08,BDM02,GemullaL08}.

The difference between the distributed streams
model~\cite{Cormode13,cmyz12,GT01,CTW16} considered earlier, and the parallel
stream model considered here is that in the distributed streams model, the focus
is on minimizing the communication between processors while in our model,
processors can coordinate using shared memory, and the focus is on
work-efficiency of the parallel algorithm. Prior work on shared-memory parallel streaming has
considered frequency counting~\cite{DasAAA09a,TTW14} and aggregates on graph
streams~\cite{TPT13}, but to our knowledge, there is none so far on random
sampling. Prior work on warehousing of sample data~\cite{BrownH06} has considered methods for sampling under
minibatch arrival, where disjoint partitions of new data are handled in parallel. Our work also considers
how to sample from a single partition in parallel, and can be used in conjunction with a method such as~\cite{BrownH06}.


\section{Preliminaries and Notation}
\label{sec:prelim}
A \emph{stream} $\Stream$ is a potentially infinite sequence of minibatches $B_1
B_2,\ldots$, where each minibatch consists of one or more elements. Let
$\stream_t$ denote the stream so far until time $t$, consisting of all elements
in minibatches $B_1,B_2,\ldots,B_{t}$. Let $n_i = |B_i|$ and $N_t = \sum_{i=1}^t
n_i$, so $N_t$ is the size of $\stream_t$. The size of a minibatch
is not known until the minibatch is received, and the minibatch is received as
an array in memory. A \emph{stream segment} is a finite sequence of consecutive
elements of a stream. For example, a minibatch is a stream segment. A
\emph{window} of size $w$ is the stream segment consisting of the $w$ most
recent elements.

Let $[n]$ denote the set $\{1, \dots, n\}$. For sequence $X = \ssequence{x_1,
  x_2, \dots, x_{|X|}}$, the $i$-th element is denoted by $X_i$ or $X[i]$. For
convenience, negative index $-i$, written $X[-i]$ or $X_{-i}$, refers to the
$i$-th index from the right end---i.e., $X[|X| - i + 1]$. Following common array
slicing notation, let $X[a\texttt{:}]$ be the subsequence of $X$ starting
from index $a$ onward.
An event happens with high probability (\whp) if it happens with
probability at least $1 - n^{-c}$ for some constant $c \geq 1$.
Let \uniformSample$(a, b)$, $a \leq b$, be a function that returns an element
from $\{a, a+1, \dots, b\}$ chosen uniformly at random. For $0 < p \le 1$,
\coinFlip($p$)$\;\in \{\textsf{\itshape H}, \textsf{\itshape T}\}$ returns heads
(\textsf{\itshape H}) with probability $p$ and tails (\textsf{\itshape T}) with
probability $1-p$. For $m \le n$, an $m$-permutation of a set $S$, $|S| = n$, is
an ordering of $m$ elements chosen from $S$.

We analyze algorithms in the work-depth model assuming concurrent reads and
arbitrary-winner concurrent writes. The \emph{work} of an algorithm is the total
operation count, and \emph{depth} (also called parallel time or span) is the
length of the longest chain of dependencies within that algorithm. The gold
standard in this model is for an algorithm to perform the same amount of work as
the best sequential counterpart (work-efficient) and to have polylogarithmic
depth.
This setting has been fertile ground for research and experimentation on
parallel algorithms. Moreover, results in this model are readily portable to
other related models, e.g., exclusive read and exclusive write, with a modest
increase in cost (see, e.g.,~\cite{BM04:book}).

\begin{fulltext}
\emph{Parallel semisorting} is the problem of reordering an input sequence
of keys so that like sorting, equal keys are arranged contiguously, but unlike
sorting, different keys are not necessarily in sorted order. We rely on the
following result:

\begin{theorem}[\cite{GuShunSunBlelloch15}]
\label{thm:semisorting}
On input a sequence $X = \ssequence{x_1, \dots, x_n}$, where $x_i$ can be
uniformly hashed to $[n^k]$ in constant time, parallel semisorting can be
implemented in $O(n)$ expected work and space, and $O(\log n)$ depth \whp.
\end{theorem}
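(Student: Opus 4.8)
The plan is to realize semisorting as a bucketing (distribution) step followed by cheap within-bucket cleanup, exploiting that we only need equal keys to be contiguous rather than globally ordered. First I would apply the given constant-time hash to send each $x_i$ to an integer in $[n^k]$, then compose it with a second random hash $h$ into a range of $\Theta(n)$ buckets. Using a parallel prefix sum (scan) over per-bucket counts, I can compute each bucket's starting offset and scatter every element into its bucket in $O(n)$ work and $O(\log n)$ depth. If every bucket ended up small and held only $O(1)$ distinct original keys, a final grouping inside each bucket would finish the job; the entire difficulty is that this need not happen when key multiplicities are large or skewed.

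To control bucket sizes I would introduce a heavy/light dichotomy. Drawing a random sample of the input and counting, I can estimate, \whp, the multiplicity of every frequent key; call a key \emph{heavy} if its estimated count exceeds a threshold of order $\log n$ (equivalently, of order $n / (\#\text{buckets})$) and \emph{light} otherwise. Each heavy key is given its own dedicated contiguous block of buckets whose size is proportional to its estimated count (inflated by a constant factor to absorb estimation error), and its occurrences are spread uniformly at random across that block; the light keys are hashed into the remaining $\Theta(n)$ buckets as above. Because distinct heavy keys own disjoint blocks, they are automatically separated, and light keys colliding in a shared bucket are few.

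Correctness and efficiency then rest on concentration. By a balls-into-bins / Chernoff argument, each light bucket receives only $O(\log n)$ elements \whp, and since at most $n$ distinct keys are spread over $\Theta(n)$ buckets, a typical bucket holds $O(1)$ distinct keys; grouping the $s$ elements of such a bucket on its few distinct keys then costs $O(s)$ work and $O(\log s)$ depth, so the cleanup across all buckets sums to $O(n)$ work. Adding the scatter and the heavy-block handling gives $O(n)$ expected work and $O(n)$ space, while the depth is dominated by the prefix sums together with the high-probability size bounds, yielding $O(\log n)$ depth \whp. If any bucket is still too large, a recursive call on it suffices, but the sampling step is designed so that this occurs with vanishing probability.

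The hard part will be the load balancing under unknown and possibly highly skewed multiplicities: guaranteeing simultaneously that total work stays linear, space stays $O(n)$, and depth stays $O(\log n)$ requires that no bucket overflows. This in turn hinges on the sampling-based frequency estimates being accurate enough to size the heavy-key blocks correctly and on the hash $h$ behaving like a fully random function so that the Chernoff bounds apply. The delicate trade-off is in choosing the heavy/light threshold and the block-inflation constant so the estimates are reliable \whp while wasting at most a constant factor of space.
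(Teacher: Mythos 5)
This theorem is not proved in the paper at all: it is quoted, with attribution, from~\cite{GuShunSunBlelloch15}, and is used downstream as a black box (in the construction of the binned-sample data structure). There is therefore no in-paper proof to compare your attempt against; the relevant comparison is with the cited work itself. Your sketch is, in its essentials, a reconstruction of that algorithm: hash the keys, use a random sample to estimate multiplicities, give each heavy key a dedicated block sized by its inflated estimated count, hash the light keys into shared buckets, finish with local grouping, and invoke Chernoff-type concentration for linear work, linear space, and $O(\log n)$ depth \whp. Two quantitative points would need tightening in a full writeup. First, you state the heavy threshold as ``order $\log n$, equivalently order $n/(\#\text{buckets})$''; with $\Theta(n)$ buckets these are not equivalent ($n/\Theta(n) = O(1)$), and the sampling rate, threshold, and bucket count must be chosen consistently so that a key classified as light cannot in fact be very heavy \whp{} (the cited paper samples at rate $\Theta(1/\log n)$ and uses $\Theta(n/\log n)$ light buckets of expected size $\Theta(\log n)$ for exactly this reason). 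Second, ``a typical bucket holds $O(1)$ distinct keys'' is an expectation statement, so the cleanup cost must be bounded in aggregate---e.g., via the expected sum of squared bucket occupancies, which is $O(n)$ under the hashing assumption---rather than bucket by bucket. Neither is a conceptual gap; they are where the actual proof spends its effort.
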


A \emph{Random permutation} of a finite set $n$-element set $X$ can be generated
in parallel using $O(n)$ work and $O(\log n)$ depth (e.g.,~\cite{Reif1985}).
Later, we use the following bound:

\begin{theorem}[Theorem 1.1 of~\cite{DubhashiPanconesi2009}]
 \label{thm:chernoff}
 Let $X = \sum_{i=1}^n X_i$, where $X_i$s are independent random variables in $[0, 1]$.  For $t > 2e\cdot \expct{X}$, we have $\prob{X \geq t} \leq 2^{-t}$.
\end{theorem}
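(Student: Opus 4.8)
The plan is to use the standard exponential-moment method (Chernoff's technique). First I would fix a parameter $\lambda > 0$, to be chosen later, and apply Markov's inequality to the nonnegative random variable $e^{\lambda X}$, giving $\prob{X \geq t} = \prob{e^{\lambda X} \geq e^{\lambda t}} \leq e^{-\lambda t}\,\expct{e^{\lambda X}}$. Because the $X_i$ are independent, the moment generating function factors as $\expct{e^{\lambda X}} = \prod_{i=1}^n \expct{e^{\lambda X_i}}$, so it suffices to bound each factor and then assemble.

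The crucial use of the hypothesis $X_i \in [0,1]$ is a convexity estimate for the per-variable MGF. Since $x \mapsto e^{\lambda x}$ is convex, on $[0,1]$ it lies below the chord joining its endpoints, so $e^{\lambda x} \leq 1 + (e^\lambda - 1)x$ for all $x \in [0,1]$. Taking expectations and using $1 + y \leq e^y$ then yields $\expct{e^{\lambda X_i}} \leq 1 + (e^\lambda - 1)\expct{X_i} \leq \exp\!\big((e^\lambda - 1)\expct{X_i}\big)$. Writing $\mu = \expct{X}$ and multiplying over $i$ gives $\expct{e^{\lambda X}} \leq \exp\!\big((e^\lambda - 1)\mu\big)$, and hence $\prob{X \geq t} \leq \exp\!\big((e^\lambda - 1)\mu - \lambda t\big)$.

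It remains to choose $\lambda$ and to invoke the hypothesis $t > 2e\mu$. Minimizing the exponent over $\lambda$ gives $e^\lambda = t/\mu$, i.e.\ $\lambda = \ln(t/\mu)$, which is positive since $t > 2e\mu > \mu$. Substituting and discarding the harmless $-\mu \leq 0$ term collapses the bound to $\prob{X \geq t} \leq (e\mu/t)^t$. Finally, the hypothesis $t > 2e\mu$ says exactly that $e\mu/t < \tfrac12$, so $(e\mu/t)^t < 2^{-t}$, which is the claim (the degenerate case $\mu = 0$ forces $X = 0$ almost surely and is trivial).

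The argument is almost entirely routine; the one step deserving care is the MGF bound, where I must handle arbitrary $[0,1]$-valued variables rather than only Bernoulli ones, and the convexity/chord inequality is precisely what makes the estimate go through uniformly. The only other subtlety is that the theorem does not demand the exact optimum: rather than tracking the optimized exponent precisely, the slack built into the hypothesis $t > 2e\mu$ is spent at the very end to convert $(e\mu/t)^t$ into the clean $2^{-t}$.
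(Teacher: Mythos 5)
Your proof is correct: the Markov/MGF step, the chord (convexity) bound $e^{\lambda x} \leq 1 + (e^\lambda - 1)x$ for $x \in [0,1]$, the optimization $\lambda = \ln(t/\mu)$ yielding $\prob{X \geq t} \leq (e\mu/t)^t$, and the final use of $t > 2e\mu$ to get $2^{-t}$ are all sound, including the degenerate case $\mu = 0$. Note that the paper itself gives no proof of this statement---it is imported verbatim as Theorem 1.1 of the cited Dubhashi--Panconesi text---so there is nothing in-paper to compare against; your argument is precisely the standard Chernoff derivation that underlies the cited result.
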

\end{fulltext}

We measure the space complexity of our algorithms in terms of the number of elements stored. Our space bounds do not represent bit complexity. Often, the space used by the algorithm is a random variable, so we present bounds on the expected space complexity.



\section{Parallel Sampling from a Sliding Window}

\newcommand*{\sworvar}[0]{{\normalfont\textsc{SWOR-Sliwin}}}
\newcommand*{\rsStore}[0]{\mathcal{R}}
\newcommand*{\queryAlg}[0]{{\upshape\textsc{Sample}}}
\newcommand*{\makeAlg}[0]{{\upshape\textsc{Construct}}}

This section discusses parallel algorithms for sampling without replacement from
a sliding window (\sworvar). The task is as follows: For target sample size $s$
and maximum window size $W$, \sworvar{} is to maintain a data structure
$\rsStore$ that supports two operations: (i)~\lstinline{insert}$(B_i)$
incorporates a minibatch $B_i$ of new elements that arrived at time $i$ into
$\rsStore{}$, and (ii)~For parameters $q \leq s$ and $w \le W$,
\lstinline{sample}$(q, w)$ when posed at time $i$ returns a random sample of $q$
elements chosen uniformly without replacement from the $w$ most recent elements
in $\stream_i$.

In our implementation, \lstinline{sample}$(q,w)$ does something stronger, and
returns a $q$-permutation (not only a set) chosen uniformly at random from the
$w$ newest elements from $\rsStore$---this can additionally be used to generate
a sample of any size $j$ from $1$ till $q$ by only consider the first $j$
elements of the permutation.

One popular approach to sampling from a sliding window in the sequential
setting~\cite{BDM02,BOZ09} is the \topsAlg{} algorithm: Assign a random priority
to each stream element, and in response to \lstinline{sample(s, w)} return the
$s$ elements with the smallest priorities among the latest $w$ arrivals. To
reduce the space consumption to be sub-linear in the window size, the idea is to
store only those elements that can potentially be included in the set of $s$
smallest priorities for any window size $w$. A stream element $e$ can be
discarded if there are $s$ or more elements with a smaller priority than $e$
that are more recent than $e$. Doing so systematically leads to an expected
space bound of $O(s + s\log(W/s))$~\cite{BDM02}\footnote{The original algorithm
  stores the largest priorities but is equivalent to our view.}.

As stated, this approach expends work linear in the stream length to
examine/assign priorities, but ends up choosing only a small fraction of the
elements examined. This motivates the question: \emph{How can one determine
  which elements to choose, ideally in parallel, without expending linear work
  to generate or look at random priorities?} We are most interested in the case
where $W \gg n_i \geq s$, where $n_i$ is the size of minibatch $i$. The main
result of this section is as follows:
\begin{theorem}
  \label{thm:main-var-win}
  There is a data structure for {\sworvar{}} that uses $O(s + s\log(W/s))$
  expected space and supports the following operations:
  \begin{enumerate}[topsep=1pt,label=(\roman*)]
  \item \lstinline{insert}$(B)$ for a new minibatch $B$ uses $O(s + s\log(\tfrac{W}{s}))$ work and $O(\log W)$
    parallel depth; and
  \item \lstinline{sample(q, w)} for sample size $q \le s$ and window size $w \le W$ uses $O(q)$ work and $O(\log W)$ parallel depth.
  \end{enumerate} 
\end{theorem}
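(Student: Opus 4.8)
The plan is to first pin down exactly which elements the data structure must retain, then show how to produce and maintain that set in parallel, and finally read off the sample. Adopt the \topsAlg{} viewpoint: give every stream element an i.i.d.\ uniform priority, so that the $q$ smallest-priority elements among any $w$ most recent form a uniform size-$q$ sample without replacement (and, taken in increasing-priority order, a uniform $q$-permutation), which is exactly what \lstinline{sample(q,w)} must return. Counting positions from the right, an element at position $j$ need be stored iff its priority is among the $s$ smallest of the $j$ most recent elements; for queries with $q\le s$ and $w\le W$ nothing else can ever matter. Since position $j$ satisfies this with probability $\min(1,s/j)$, linearity of expectation gives an expected stored size of $\sum_{j=1}^{W}\min(1,s/j)=O(s+s\log(W/s))$, which is the claimed space bound and, once the set can be generated in proportional work, the target work for \lstinline{insert}.

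The crux is to generate these survivors directly, without assigning priorities to all $n$ elements of a minibatch. First I would observe that the survivor events are precisely the insertions of a reservoir-sampling process run backwards through the stream (reversed reservoir): scanning from the most recent element, the $j$-th element enters a size-$s$ reservoir with probability $\min(1,s/j)$, and the set of elements that ever enter is exactly the set to store. A sequential skip-based simulation \`{a} la Algorithm~Z would already cut the work to $O(s+s\log(W/s))$, but it is inherently serial because each skip length depends on the current state. This is where \techniqueName{} sampling comes in: I would bucket positions into the $O(\log W)$ geometric ranges $[2^k,2^{k+1})$, on which $s/j$ is constant up to a factor of $2$. Within each bucket I would sample a candidate set using the uniform over-estimate $s/2^k$ (a single draw for the count, then parallel random placement of positions via a random permutation), and then apply rejection position-by-position with acceptance probability $(s/j)/(s/2^k)\in[\tfrac12,1]$ to recover the exact per-position probabilities. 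All buckets are processed simultaneously; semisorting (Theorem~\ref{thm:semisorting}) routes elements to buckets and the permutation is built in $O(\log W)$ depth, so a minibatch costs $O(\log W)$ depth.

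I would then handle the update caused by old elements shifting right when a minibatch of size $n$ arrives: an old survivor becomes dominated exactly when $s$ smaller-priority elements become more recent than it, so inserting the minibatch's own $s$ smallest priorities as a ``barrier'' and re-pruning only the affected buckets suffices. Organizing the store as the geometric position-buckets above, each kept internally ordered by priority, lets \lstinline{sample(q,w)} collect the $q$ smallest qualifying priorities inside positions $1\dots w$; arranging this to cost $O(q)$ work and $O(\log W)$ depth (rather than touching every bucket) is a secondary data-layout subtlety I expect to resolve by keeping the globally smallest priorities readily accessible.

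The main obstacle is the rejection step. I must prove two things at once: that independent per-position acceptance reproduces the \emph{joint} \topsAlg{} survivor distribution and not merely its marginals (a coupling argument tying the rejection outcome back to the reversed-reservoir process), and that the over-estimate is close enough that the candidate count exceeds the survivor count by only a constant factor, so the wasted work is absorbed into $O(s+s\log(W/s))$. Here I would invoke the Chernoff bound (Theorem~\ref{thm:chernoff}) to control the per-bucket candidate counts \whp, which also underwrites the $O(\log W)$ depth and the expected-space guarantee.
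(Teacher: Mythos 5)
Your overall skeleton matches the paper's: both rest on the observation that the elements worth keeping are exactly those whose inclusion probability at age $j$ is $\min(1,s/j)$, giving $O(s+s\log(W/s))$ expected space, and both generate this set directly by oversampling from a simpler distribution and then correcting by rejection, with parallel semisorting and Chernoff bounds supplying the $O(\log W)$ depth. Your generation step is a genuinely different decomposition: you bucket ages into geometric ranges $[2^k,2^{k+1})$ and oversample each bucket at the constant rate $s/2^k$, whereas the paper's \fastRRAlg{} splits the window into $s$ residue-class ``tracks,'' runs the skip-based single-element sampler \fastSingleRRAlg{} on each track in parallel, and rejects with probability $is/(is+\tau)$. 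Both routes give the correct marginals and the same work and depth for the generation phase, so this difference by itself is fine.

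However, the parts you defer contain genuine gaps. First, the joint-distribution obstacle you name is real, but your plan for it (a coupling back to \topsAlg{}) is the hard way around: the paper never couples to priorities at all. Its process is \emph{defined} as independent coin flips with probability $p^{(s)}_{-j}$ plus a uniformly random slot in $[s]$ for each chosen element, and correctness is a self-contained induction showing that the map ``oldest chosen element in each slot'' is a uniform $s$-permutation of every suffix, each outcome having probability $(i-s)!/i!$. Second---and this is the largest gap---your query plan cannot meet the $O(q)$ work bound of part (ii). Returning a $q$-permutation ``in increasing priority order'' requires selecting and ordering the $q$ smallest qualifying priorities among the stored in-window elements, a selection-plus-sort over an expected $\Theta(s+s\log(w/s))$ items; this is not $O(q)$ work, and precomputation does not rescue it because $w$ is arbitrary at query time. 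The paper's slot device is precisely what makes $O(q)$ possible: the permutation order \emph{is} the slot index, so slot $i$ independently contributes its oldest in-window element, found in $O(1)$ expected work by inspecting at most two geometric bins plus the prefix-max value $\phi$. Calling this a ``secondary data-layout subtlety'' misjudges it; it is the central mechanism of the query bound. Third, your insert-time ``barrier'' pruning needs actual priority values for stored elements, but your rejection-based generation never produces priorities, and drawing them conditioned on survival is correlated across elements and not obviously doable in the stated work. The paper avoids this entirely: when a stored element shifts from age $i$ to age $j>i$, it is retained by an independent coin flip with probability $p^{(s)}_{-j}/p^{(s)}_{-i}$, which stays within the independent-Bernoulli representation and preserves the invariant with no reference to priorities.
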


Note that the work of the data structure for inserting a new minibatch is only
logarithmic in the maximum window size $W$ and independent of the size of the
minibatch. To prove this theorem, we introduce \techniqueName{} sampling, which
brings together reversed reservoir sampling and rejection sampling. We begin by
describing reversed reservoir sampling, a new perspective on priority sampling
that offers more parallelism opportunities. After that, we show how to implement
this sampling process efficiently in parallel via rejection sampling.

\subsection{Simple Reversed Reservoir Algorithm}
\label{sec:simpleRRAlg}
We now describe \emph{reversed reservoir} (RR) sampling, which mimics the
behaviors of priority sampling but provides more independence and more
parallelism opportunities. This process will be refined and expanded in
subsequent sections. After observing sequence $X$, \simpleRRAlg
(Algorithm~\ref{algo:simpleRR}) yields uniform sampling without replacement of
up to $s$ elements for any suffix of $X$. For maximum sample size $s > 0$ and
integer $i > 0$, define \( p_{-i}^{(s)} = \min\left(1, \frac{s}{i} \right) \).

We say the $i$-th most-recent element has age $i$; this position/element will be
called age $i$ when the context is clear. The algorithm examines $X$ in reverse,
$X[-1], X[-2], \ldots$, and stores a subset in data structure $A$, which records
the index of an element in $X$ as well as a slot (from $[s]$) into which the
element is mapped. Multiple elements may be mapped to the same slot. The
probability of $X[-i]$ being chosen into $A$ decreases as $i$ increases.


\begin{algorithm}[h]
  \caption{\simpleRRAlg$(X, s)$ --- Na\"ive reversed reservoir sampling}
  \label{algo:simpleRR} 
  \KwIn{a stream segment $X = \langle x_1, \dots, x_{|X|} \rangle$ and a
    parameter $s > 0$, $s \leq |X|$.}

  \KwOut{a set $\{(k_i, \ell_i)\}$, where $k_i$ is an index into $X$ and $\ell_i
    \in [s]$}

  $\pi \gets $ Random permutation of $[s]$, $A_0 = \emptyset$\\
  \lFor{i = $1, 2, \dots, s$}{$A_i = A_{i-1} \cup \{(i, \pi_i)\}$}
  \For{i = $s+1, 2, \dots, |X|$}{
    \If{\coinFlip$(p_{-i}^{(s)})$ == \coinHeads}{
      $\ell \gets $\uniformSample$(1, s)$\\
      $A_{i} = A_{i-1} \cup \{(i, \ell)\}$
    }
    \lElse {
       $A_{i} = A_{i-1}$
    }
  }

  \Return {$A_{|X|}$}
\end{algorithm}

This algorithm samples an element at index $-i$ (age $i$) with probability
$p_{-i}^{(s)} = \min(1, s/i)$, the same probability it would in priority
sampling (aka.bottom-$k$ sampling). Also, every element sampled is assigned a
random \emph{slot} number between $1$ and $s$. This is used to generate a random
permutation.

Reserved reservoir sampling has a number of nice properties:

\begin{lemma}
  For input a stream segment $X$ and a parameter $0 < s \leq |X|$, the number of
  elements sampled by \simpleRRAlg{} is expected $s + O(s\log (|X|/s))$.
\end{lemma}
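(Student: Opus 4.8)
The plan is to compute the expected sample size directly via linearity of expectation, exploiting the fact that the algorithm makes one independent decision per age. For each $i \in \{1, \dots, |X|\}$, let $Y_i$ be the indicator that the age-$i$ element $X[-i]$ is inserted into $A$, so that the number of sampled elements equals $|A_{|X|}| = \sum_{i=1}^{|X|} Y_i$. Reading off \simpleRRAlg, the first $s$ ages are inserted unconditionally (via the permutation $\pi$), so $Y_i = 1$ deterministically for $1 \le i \le s$; for $i > s$, the element is inserted exactly when \coinFlip$(p_{-i}^{(s)})$ comes up heads, which happens independently with probability $p_{-i}^{(s)} = \min(1, s/i) = s/i$. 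Hence $\expct{Y_i} = 1$ for $i \le s$ and $\expct{Y_i} = s/i$ for $i > s$.

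First I would apply linearity of expectation to obtain
$$\expct{|A_{|X|}|} = s + \sum_{i=s+1}^{|X|} \frac{s}{i} = s + s\,(H_{|X|} - H_s),$$
where $H_n = \sum_{j=1}^n 1/j$ denotes the $n$-th harmonic number. The only remaining step is to bound the harmonic difference, which I would do with the standard integral comparison $\sum_{i=s+1}^{|X|} 1/i \le \int_s^{|X|} dx/x = \ln(|X|/s)$. This already yields the upper bound $\expct{|A_{|X|}|} \le s + s\ln(|X|/s)$; that the estimate is tight up to constants follows either from $\sum_{i=s+1}^{|X|} 1/i \ge \ln\bigl((|X|+1)/(s+1)\bigr)$ or from the expansion $H_n = \ln n + \gamma + O(1/n)$, which gives $H_{|X|} - H_s = \ln(|X|/s) + O(1/s)$. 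Combining, $\expct{|A_{|X|}|} = s + s\log(|X|/s) + O(1) = s + O(s\log(|X|/s))$, as claimed.

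The argument is essentially routine once the per-age independence is noted, so there is no serious obstacle; the only points meriting a line of care are confirming that the $s$ deterministically-inserted ages contribute exactly the additive $s$ term (and do not overlap the harmonic tail, which starts at $i = s+1$), and checking that the $O(1/s)$ slack in the harmonic approximation is absorbed into the $O(s\log(|X|/s))$ term. No concentration is required here since the statement concerns only the expectation; a corresponding high-probability bound on $\sum_{i>s} Y_i$ would instead follow from the Chernoff bound of Theorem~\ref{thm:chernoff} applied to these independent $[0,1]$-valued indicators.
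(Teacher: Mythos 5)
Your proof is correct and follows essentially the same route as the paper's: the first $s$ ages contribute $s$ deterministically, and for $i > s$ linearity of expectation gives the harmonic tail $\sum_{i=s+1}^{|X|} s/i$, which is bounded by the integral comparison $\int_s^{|X|} (s/x)\,dx = s\ln(|X|/s)$. Your write-up is slightly more explicit (naming the indicators and noting tightness of the bound), but the decomposition and the key estimate are identical to the paper's argument.
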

\begin{proof}
  The elements at indicies $-1, -2, \dots, -s$ are always chosen, contributing
  $s$ elements to the output. For $i = s+1, \dots, |X|$, the probability that
  $x_{-i}$ is sampled is $s/i$, so the expected number of samples among these
  elements is
  \[
    \sum_{i=s+1}^{|X|} \frac{s}{i} \leq s\int_{x=s}^{|X|} \frac{s}{x} dx =
    s\ln(|X|/s),
  \]
  which completes the proof.
\end{proof}

\newcommand*{\drawPerm}[0]{\chi}
\newcommand*{\lrElt}[0]{\nu}

Let $A$ denote the result of \simpleRRAlg. Using this, sampling $s$ elements
without replacement from any suffix of $X$ is pretty straightforward.  Define
\[
  \drawPerm(A) = (\lrElt_A(1), \lrElt_A(2), \dots, \lrElt_A(s)) 
\]
where \( \lrElt_A(\ell) = \argmax_{k \geq 1} \{ (k, \ell) \in A \}\)
is\footnote{Because $|X| \geq s$, the function $\lrElt$ is always defined.} the
oldest element assigned to slot $\ell$.
Given $A$, we can derive $A_{i}$ for any $i \le |X|$ by considering the appropriate subset of $A$.
We have that $\drawPerm(A_i)$ is an $s$-permutation of the $i$ most recent elements of $X$. 
\begin{lemma}
  If $R$ is any $s$-permutation of $X[-i\texttt{:}]$, then
  \begin{equation}
    \prob{R = \drawPerm(A_i)} = \frac{(i - s)!}{i!} \label{eq:perm-prob}
  \end{equation}
\end{lemma}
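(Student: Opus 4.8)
The plan is to induct on $i$, the number of most-recent elements under consideration, exploiting the fact that $\frac{(i-s)!}{i!}$ is exactly the reciprocal of the number of $s$-permutations of an $i$-element set; thus the statement is equivalent to asserting that $\drawPerm(A_i)$ is \emph{uniformly distributed} over all such permutations. For the base case $i = s$, the set $A_s$ consists of the pairs $(1,\pi_1), \dots, (s,\pi_s)$ for a uniform random permutation $\pi$ of $[s]$; since each slot receives exactly one age, $\drawPerm(A_s)$ just reads off the inverse of $\pi$ slot by slot, hence is uniform over the $s!$ orderings of ages $1,\dots,s$, matching $\tfrac{0!}{s!} = 1/s!$.

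For the inductive step ($i > s$), the key structural observation is how $A_i$ is built from $A_{i-1}$ by the single decision made at age $i$ in Algorithm~\ref{algo:simpleRR}. With probability $1 - p_{-i}^{(s)} = 1 - s/i$, age $i$ is not sampled and $A_i = A_{i-1}$, so $\drawPerm(A_i) = \drawPerm(A_{i-1})$. Otherwise, with probability $s/i$, age $i$ is sampled into a uniformly chosen slot $\ell$; because $i$ is the \emph{oldest} age present, it automatically wins the $\argmax$ defining $\lrElt_{A_i}(\ell)$, so $\drawPerm(A_i)$ is obtained from $\drawPerm(A_{i-1})$ by overwriting coordinate $\ell$ with $i$ and leaving every other coordinate untouched. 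Crucially, the coin flip and slot draw at age $i$ are independent of all the randomness used to form $A_{i-1}$, which is what lets me combine the induction hypothesis with the age-$i$ probabilities.

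I would then fix an arbitrary target $s$-permutation $r = (r_1,\dots,r_s)$ of $X[-i\texttt{:}]$ and split on whether age $i$ occurs in $r$. If $i \notin r$, then $r$ can only arise when age $i$ is not sampled (a sampled age $i$ would force some coordinate to equal $i$), so $\prob{\drawPerm(A_i) = r} = (1 - s/i)\,\prob{\drawPerm(A_{i-1}) = r}$, and the induction hypothesis applied to the $(i-1)$-element suffix gives $\tfrac{i-s}{i}\cdot\tfrac{(i-1-s)!}{(i-1)!} = \tfrac{(i-s)!}{i!}$. If instead $r_{\ell^*} = i$ for some slot $\ell^*$, then $r$ can only arise when age $i$ is sampled into slot $\ell^*$ (probability $\tfrac{s}{i}\cdot\tfrac1s = \tfrac1i$), while $\drawPerm(A_{i-1})$ need only agree with $r$ on all coordinates \emph{except} $\ell^*$.

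The hard part will be this last case, which carries the one subtle counting step: $\drawPerm(A_{i-1})$ is not pinned to a single value but may be any $s$-permutation of $\{1,\dots,i-1\}$ that matches $r$ off coordinate $\ell^*$. I would count these by noting that the $s-1$ ages already used in the other coordinates leave exactly $(i-1)-(s-1) = i-s$ admissible ages for coordinate $\ell^*$, hence $i-s$ such permutations, each of probability $\tfrac{(i-1-s)!}{(i-1)!}$ by induction. Multiplying the independent age-$i$ probability by this count yields $\tfrac1i\cdot(i-s)\cdot\tfrac{(i-1-s)!}{(i-1)!} = \tfrac{(i-s)!}{i!}$, matching the other case and closing the induction.
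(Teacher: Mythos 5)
Your proposal is correct and follows essentially the same route as the paper's proof: induction on $i$, a case split on whether the newest age appears in the target permutation, and in the affirmative case counting the $i-s$ admissible values for the overwritten coordinate (the paper's $i-(s-1)$ choices, with its indexing shifted by one). The only differences are cosmetic---you induct from $i-1$ to $i$ rather than $i$ to $i+1$ and make the uniformity interpretation and independence explicit.
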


\begin{proof}
  We proceed by induction on $i$. The base case of $i = s$ is easy to verify
  since $\pi$ is a random permutation of $[s]$ and $\drawPerm(A_s)$ is a
  permutation of $X[-s:]$ according to $\pi$. For the inductive step, assume
  that equation \eqref{eq:perm-prob} holds for for any $R$ that is an
  $s$-permutation of $X[-i:]$. Now let $R'$ be an $s$-permutation of
  $X[-(i+1):]$. Consider two cases:
  \begin{itemize}
  \item Case I: $x_{-(i+1)}$ appears in $R'$, say at at $R'_\ell$. For $R' =
    \drawPerm(A_{i+1})$, it must be the case that $x_{-(i+1)}$ was chosen and
    was assigned to slot $\ell$. Furthermore, $\drawPerm(A_i)$ must be identical
    to $R'$ except in position $\ell$, where it could have been any of the $i -
    (s - 1)$ choices.  This happens with probability
    \begin{align*}
        \textstyle
        (i - [s-1])\cdot \frac{(i - s)!}{i!} \cdot p_{-(i+1)}\cdot \frac{1}{s}
      &= \frac{(i-s)!}{i!}\cdot \frac{i - s + 1}{s}\cdot\frac{s}{i+1}\\
      &= \frac{([i+1]-s)!}{(i+1)!}
    \end{align*}

  \item Case II: $x_{-(i+1)}$ does not appear in $R'$. Therefore, $R'$ must be
    an $s$-permutation of $X[-i:]$ and $x_{-(i+1)}$ was not sampled. This
    happens with probability
    \[
      \textstyle\frac{(i-s)!}{i!} \cdot (1 - p_{-(i+1)}) = \frac{(i-s)!}{i!}\cdot
      \frac{i-s+1}{i+1} = \frac{([i+1] -s)!}{(i+1)!}
    \]
  \end{itemize}
  In either case, this gives the desired probability.
\end{proof}

Note that the space taken by this algorithm (the size of $A_{|X|}$) is $O(s + s \log(|X|/s))$, which is
optimal~\cite{GemullaL08}. The steps are easily parallelizable but still need
$O(|X|)$ work, which can be much larger than the $(s+s\log(|X|/s))$ bound on the 
number of elements the algorithm must sample. We improve on this next.

\subsection{Improved Single-Element Sampler}

This section addresses the special case of $s = 1$.
Our key ingredient will be the ability to compute the next index that will be
sampled, without touching elements that are not sampled.

Let $x_{-i}$ be an element just sampled. We can now define a random variable
$\skipp(i)$ that indicates how many elements past $x_{-i}$ will be skipped over
before selecting index $-(i + \skipp(i))$ according to the distribution given by
\simpleRRAlg{}. Conveniently, this random variable can be efficiently generated
in $O(1)$ time using the inverse transformation
method~\cite{Ross09:prob-model-book} because its cumulative distribution
function (CDF) has a simple, efficiently-solvable form: $\prob{\skipp(i) \leq k}
= 1 - \prod_{t=i+1}^{i+k} (1 - p_{-t}) = 1 - \frac{i}{i+k}= \frac{k}{i+k}$.
This leads to the following improved algorithm:
\begin{algorithm}[h]
  \caption{\fastSingleRRAlg$(X)$ --- Fast RR sampling for $s=1$}
  \KwIn{a stream segment $X = \langle x_1, \dots, x_{|X|} \rangle$.}

  \KwOut{a set $\{(k_i, \ell_i)\}$, where $k_i$ is an index into $X$ and $\ell_i
    = 1$}
  \tcp{The same input/output behaviors as \simpleRRAlg{}.}

  $i \gets 1$\\
  \While{$i < |X|$}{
    $A \gets A \cup \{(i, 1)\}$\\
    $i \gets i + \skipp(i)$\\
  } 
  
  \Return {$A$}
\end{algorithm}

This improvement significantly reduces the number of iterations:
\newcommand*{\FSR}[0]{\normalfont\footnotesize\textsf{FSR}}
\begin{lemma}
  \label{lemma:fsr-count}
  Let $T_{\FSR}(n)$ be the number of times the \textbf{while}-loop in the
  \fastSingleRRAlg{} algorithm is executed on input $X$ with $n = |X|$. Then,
  $\expct{T_{\FSR}(n)} = O(1 + \log(n))$. Also, for $m \geq n$ and $c \geq
  4$, \(\prob{T_{\FSR}(n) \geq 1 + c\cdot \log(m)} \leq m^{-c}\).
\end{lemma}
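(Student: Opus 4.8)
The plan is to reduce $T_{\FSR}(n)$ to a sum of independent Bernoulli indicators and then invoke the Chernoff bound of Theorem~\ref{thm:chernoff}. First I would argue that the set of indices sampled by \fastSingleRRAlg{} is distributed exactly like the set sampled by \simpleRRAlg{} with $s=1$, namely $\{i : Y_i = 1\}$ where the $Y_i$ are independent with $\prob{Y_i = 1} = \min(1, 1/i)$. This is precisely what the $\skipp$ distribution is engineered to reproduce: conditioned on having just sampled index $i$, the event $\skipp(i) = k$ coincides with ``no sample at $i+1, \dots, i+k-1$ and a sample at $i+k$'' under independent trials, since $\prod_{t=i+1}^{i+k}(1-1/t) = i/(i+k)$ matches the stated CDF $\prob{\skipp(i) \le k} = k/(i+k)$. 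Because the trials at distinct indices are independent, iterating this gap argument (formally, a short induction over consecutive sampled indices) shows the whole sampled set is identically distributed. Since the \textbf{while}-loop executes one iteration per sampled index strictly below $n$, this yields $T_{\FSR}(n) = \sum_{i=1}^{n-1} Y_i$ in distribution, a quantity depending only on $n=|X|$.

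The expectation bound is then immediate from linearity: $\expct{T_{\FSR}(n)} = \sum_{i=1}^{n-1}\min(1,1/i) = 1 + \sum_{i=2}^{n-1} 1/i \le 1 + \ln n$, a harmonic sum, which is $O(1 + \log n)$.

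For the tail bound I would isolate the deterministic first term. Writing $T_{\FSR}(n) = 1 + Z$ with $Z = \sum_{i=2}^{n-1} Y_i$, the summands of $Z$ lie in $[0,1]$ and are independent, and $\expct{Z} = \sum_{i=2}^{n-1} 1/i < \ln n \le \ln m$ since $m \ge n$. Taking $t = c\log_2 m$, the prerequisite $t > 2e\,\expct{Z}$ of Theorem~\ref{thm:chernoff} reads $c\log_2 m > 2e\ln m = (2e\ln 2)\log_2 m$, which holds because $c \ge 4 > 2e\ln 2 \approx 3.77$. The theorem then gives $\prob{Z \ge c\log_2 m} \le 2^{-c\log_2 m} = m^{-c}$, and since $\prob{T_{\FSR}(n) \ge 1 + c\log_2 m} = \prob{Z \ge c\log_2 m}$, this is exactly the claimed bound (reading $\log = \log_2$, which is what makes $2^{-c\log m}$ equal to $m^{-c}$).

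The main obstacle is really about lining up constants, in two places. The conceptual step is justifying the distributional identity $T_{\FSR}(n) \stackrel{d}{=} \sum_{i<n} Y_i$ cleanly; this is where the CDF computation for $\skipp$ does the real work, and I would package it as a self-contained claim. The quantitative step is that the Chernoff prerequisite $t > 2e\,\expct{X}$ \emph{fails} if one applies it to $T_{\FSR}(n)$ directly at threshold $t = c\log_2 m$ (because $2e \approx 5.44 > 4$, so $c \ge 4$ does not dominate $2e\cdot\expct{T_{\FSR}(n)}$). The decisive trick is therefore to peel off the always-chosen index $1$ and apply the bound only to $Z$, whose mean is smaller than $\ln m$ by the full additive $1$; this reduction of the governing constant from $2e$ to $2e\ln 2$ is exactly what makes $c \ge 4$ sufficient rather than requiring $c > 2e$.
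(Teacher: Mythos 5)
Your proof is correct and follows essentially the same route as the paper's: write $T_{\FSR}(n) = 1 + Z$ where $Z$ is a sum of independent indicators with $\prob{Z_i = 1} = 1/i$, bound $\expct{Z} \leq \ln n$ by the harmonic sum, and apply Theorem~\ref{thm:chernoff} to $Z$ alone at threshold $t = c\log_2 m$, using $c \geq 4 > 2e\ln 2$. The only additions are that you explicitly verify, via the $\skipp{}$ CDF, that the skip-based process reproduces the independent Bernoulli sampling of \simpleRRAlg{} with $s=1$ (a step the paper leaves implicit), and that you articulate why peeling off the deterministic first iteration is exactly what makes $c \geq 4$ suffice---both consistent with, not divergent from, the paper's argument.
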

\begin{proof}
  Let $Z_i$ be an indicator variable for whether $x_{-i}$ contributes to an
  iteration of the \textbf{while}-loop. Hence, $T_{\FSR}(n) = 1 + Z$,
  where $Z = \sum_{i=2}^{|X|} Z_i$. But $\prob{Z_i = 1} = 1/i$, so $\expct{Z} =
  \frac12 + \frac13 + \dots + \frac1n \leq \ln n$.  This proves the expectation
  bound.

  Let $t = c \log_2 m$, so $t \geq 4\cdot \log_2 m > 2e\ln(2)\cdot \log_2m =
  2e\ln m \geq 2e \ln n \geq 2e\expct{Z}$. Therefore, by
  Theorem~\ref{thm:chernoff}, we have that
  \begin{align*}
    \prob{T_{\FSR}(n) \geq 1 + c\cdot \log_2(m)}  \leq \prob{Z \geq c\cdot \log_2 m}
    \leq 2^{-t} =  m^{-c},
  \end{align*}
  which concludes the proof.
\end{proof}

Immediately, this means that if $A =
\fastSingleRRAlg(X)$ is kept as a simple sequence (e.g., an array), the running
time---as well as the length of $A$---will be $O(1 + \log(|X|))$ in expectation.
Moreover, $A$ follows the same distribution as \simpleRRAlg{} with $s = 1$, only
more efficiently computed.

\begin{remark}
  Vitter~\cite{Vitt85} studied a related problem that requires
  sampling from the same distribution as {\simpleRRAlg{}}. He developed a
  sophisticated algorithm (Algorithm Z) for generating the skip offsets for $s
  \geq 1$. Our $\fastSingleRRAlg{}$ algorithm addresses the special case where
  $s = 1$, which is significantly simpler and does not require the same level of
  machinery as Vitter's Algorithm Z.
\end{remark}

\subsection{Improved Multiple-Element Sampler}
In the general case of reversed reservoir sampling, generating skip offsets from
the distribution for $s > 1$ turns out to be significantly more involved than
for $s = 1$. While this is still possible, e.g., using a variant of Vitter's
Algorithm Z~\cite{Vitt85}, prior algorithms appear inherently sequential.

This section describes a new parallel algorithm that builds on
\fastSingleRRAlg{}. In broad strokes, it first ``oversamples'' using a simpler
distribution and subsequently, ``downsamples'' to correct the sampling
probability. To enable parallelism, we logically divide the stream segment into
$s$ ``tracks'' of roughly the same size and have the single-element algorithm work on
each track in parallel. 

\paragraph{Division of Work via Tracks.} The aim of the first phase is to sample
each element $x_{-i}$ with a slightly higher probability than $p_{-i}^{(s)}$, in
a way that results in about $s + s\log(|X|/s)$ elements sampled at the end of
the phase---and the sampling could be carried out in parallel, with depth less
than $s$. To this end, we logically divide the stream segment into $s$ tracks of
about the same size and in parallel, have the single-element algorithm work on
each track.

\paragraph{Track View.}
Define $\createViewAlg(X, k)$ to return a view corresponding to track $k$ on
$X$: if $Y = \createViewAlg(X, k)$, then $Y_{-i}$ is $X[-\alpha_s^{(k)}(i)]$,
where $\alpha_s^{(k)}(i) = i\cdot{}s + k$.
That is, track $k$ contains, in reverse order, indices $-(s+k), -(2s+k),
-(3s+k), \dots$. Importantly, these views never have to be materialized.

\begin{algorithm}[tbh]
  \caption{\fastRRAlg$(X, s)$ --- Fast reversed reservoir sampling}
  \label{algo:fastRR}
  \KwIn{a stream segment $X = \langle x_1, \dots, x_{|X|} \rangle$ and a
    parameter $s > 0$, $s \leq |X|$.}

  \KwOut{a set $\{(k_i, \ell_i)\}$, where $k_i$ is an index into $X$ and $\ell_i
    \in [s]$}

  $\pi \gets $ draw a random permutation of $[s]$ \\
  $T_0 \gets \{ (i, \pi_i) \mid i = 1, 2, \dots, s\}$\\
  \For{$\tau = 1, 2, \dots, s$ \textbf{\upshape in parallel}}{
    $\subsegX^{(\tau)} \gets \createViewAlg(X, \tau)$ \\
    $T_\tau \gets \fastSingleRRAlg(\subsegX^{(\tau)})$ \\
    $T'_\tau \gets $ $\{ (i, \ell) \in T_\tau \mid$ \coinFlip$({i\cdot
      s}/{\alpha_s^{(\tau)}(i)}) = $~\coinHeads{}$\}$ \tcp{filter, keep if
      coin shows heads}

    $T''_\tau \gets \{(i,\,$\uniformSample$(1, s)) \mid (i, \text{\texttt{\string_})} \in T'_\tau \}$ \tcp{map}

  }

  \Return {$T_0 \cup T''_1 \cup T''_2 \cup \dots \cup T''_s$}
\end{algorithm}

Algorithm~\ref{algo:fastRR} combines the ideas developed so far. We now argue
that \fastRRAlg{} yields the same distribution as \simpleRRAlg{}:

\begin{lemma}
  Let $A$ be a return result of $\fastRRAlg(X, s)$. Then, for $j=1,\dots, |X|$
  and $\ell \in [s]$, \( \prob{(j, \ell) \in A} = \frac{1}{s}\cdot
  p^{(s)}_{-j}. \)
\end{lemma}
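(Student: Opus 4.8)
The plan is to compute $\prob{(j,\ell) \in A}$ by tracing how an element at index $j$ can end up in the output $A = T_0 \cup T''_1 \cup \dots \cup T''_s$. First I would dispose of the base case: if $j \le s$, then $(j,\pi_j) \in T_0$ deterministically, and since $p^{(s)}_{-j} = 1$ for $j \le s$ and $\pi$ is a uniformly random permutation of $[s]$, the slot assigned to $j$ is uniform over $[s]$, giving $\prob{(j,\ell) \in A} = \tfrac{1}{s} = \tfrac{1}{s}p^{(s)}_{-j}$. This matches the claim. The real work is the case $j > s$.

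For $j > s$, the key observation is that index $j$ belongs to exactly one track: writing $j = \tau + i\cdot s$ for the unique $\tau \in [s]$ and integer $i \ge 1$, we have $j = \alpha_s^{(\tau)}(i)$, so index $j$ of $X$ corresponds to position $-i$ of the view $\subsegX^{(\tau)} = \createViewAlg(X, \tau)$. Thus $(j,\ell)$ can enter $A$ only through $T''_\tau$, and I would decompose the event into three independent stages: (a)~the single-element sampler $\fastSingleRRAlg$ selects position $i$ within track $\tau$; (b)~the filtering coin $\coinFlip(i s / \alpha_s^{(\tau)}(i))$ comes up heads; and (c)~the map step assigns slot $\ell$ via $\uniformSample(1,s)$. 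By Lemma~\ref{lemma:fsr-count}'s setup and the earlier analysis of \fastSingleRRAlg{}, stage (a) occurs with the single-element probability $\min(1, 1/i) = 1/i$ (for $i \ge 1$), since within a track the sampler uses $s=1$ probabilities $p_{-i}^{(1)} = 1/i$.

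The plan is then simply to multiply the three stage probabilities:
\begin{align*}
  \prob{(j,\ell) \in A}
  &= \underbrace{\frac{1}{i}}_{\text{stage (a)}} \cdot
     \underbrace{\frac{i\cdot s}{\alpha_s^{(\tau)}(i)}}_{\text{stage (b)}} \cdot
     \underbrace{\frac{1}{s}}_{\text{stage (c)}}
   = \frac{1}{i}\cdot\frac{i\cdot s}{j}\cdot\frac{1}{s}
   = \frac{1}{j}.
\end{align*}
Finally I would reconcile this with the target: for $j > s$ we have $p^{(s)}_{-j} = s/j$, so $\tfrac{1}{s}p^{(s)}_{-j} = \tfrac{1}{j}$, which agrees. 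The step I expect to require the most care is justifying stage (a)'s probability rigorously---namely that $\fastSingleRRAlg$ selects position $i$ of its input with probability exactly $1/i$. This is the per-element marginal inclusion probability of the $s=1$ reversed reservoir process, which follows from the \simpleRRAlg{} distributional analysis (the earlier text notes $\prob{Z_i = 1} = 1/i$), and from the fact that \fastSingleRRAlg{} samples from exactly the same distribution as \simpleRRAlg{} with $s=1$. I would also need to note that the filtering probability $i s/\alpha_s^{(\tau)}(i)$ is a valid probability (at most $1$), which holds because $\alpha_s^{(\tau)}(i) = is + \tau \ge is$, and that the three stages are genuinely independent given that $i$ lies in track $\tau$.
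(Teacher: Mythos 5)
Your proposal is correct and follows essentially the same argument as the paper's proof: the same base case for $j \le s$, and for $j > s$ the same decomposition via the unique track $\tau$ with $j = \alpha_s^{(\tau)}(i)$, multiplying the single-element selection probability $p^{(1)}_{-i} = 1/i$, the rejection-filter probability $is/\alpha_s^{(\tau)}(i)$, and the uniform slot probability $1/s$. Your added remarks (that the filter probability is a valid probability and that $\fastSingleRRAlg$ matches the $s=1$ marginals of $\simpleRRAlg$) are just explicit justifications of steps the paper treats as immediate.
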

\begin{proof}
  For $j \leq s$, age $j$ is paired with a slot $\ell$ drawn from a random
  permutation of $[s]$, so $\prob{(j, \ell) \in A} = \frac{1}{s} = \frac{1}{s}
  \cdot 1 = \frac{1}{s}\cdot p^{(s)}_{-j}$.
  For $j > s$, write $j$ as $j = s\cdot i + \tau$, so age $j$ appears as age $i$
  in view $\subsegX^{(\tau)}$. Now age $j$ appears in $A$ if both of these
  events happen: (1) age $i$ was chosen into $T_\tau$ and (2) the coin turned
  up heads so it was retained in $T'_\tau$. These two independent events happen
  together with probability \[ p^{(1)}_{-i} \cdot \frac{i\cdot
    s}{\alpha_s^{(\tau)}(i)} = \frac{1}{i}\cdot \frac{i\cdot s}{s\cdot i + \tau}
  = \frac{s}{j} = p^{(s)}_{-j}.\] Once age $j$ is chosen, it goes to slot
  $\ell$ with probability $1/s$. Hence, $\prob{(j, \ell) \in A} =
  \frac{1}{s}\cdot p^{(s)}_{-j}$.
\end{proof}

Because the cost of an algorithm depends on the choice of data structures, we
defer the cost analysis of \fastRRAlg{} to the next section, after we discuss
how the reserved samples will be stored.

\subsection{Storing and Retrieving Reserved Samples}
\emph{How should we store the sampled elements?} An important design goal is for
samples of any size $q \leq s$ to be generated without first generating $s$
samples. To this end, observe that restricting $\drawPerm(A)$ to its first $q
\leq s$ coordinates yields a $q$-permutation over the input. This motivates a
data structure that stores the contents of different slots separately.

Denote by $\rsStore(A)$, or simply $\rsStore$ in clear context, the
\emph{binned-sample} data structure for storing reserved samples $A$. The
samples are organized by their slot numbers $(\rsStore_i)_{i=1}^s$, with
$\rsStore_i$ storing slot $i$'s samples. Within each slot, samples are binned by
their ages. In particular, each $\rsStore_i$ contains $\lceil \log_2(\lceil
|X|/s \rceil) \rceil + 1$ bins, numbered $0, 1, 2, \dots, \lceil \log_2(\lceil
|X|/s \rceil) \rceil$---with bin $k$ storing ages $j$ in the range
$2^{k-1} < \lceil j/s \rceil \leq 2^k$. Below, bin $t$ of slot $i$ will be
denoted by $\rsStore_i[t]$.

Additional information is kept in each bin for fast queries: every bin $k$
stores $\phi(k)$, defined to be the age of the oldest element in bin $k$ and all
younger bins for the same slot number.

Below is an example. Use $s = 3$ and $|X| = 16$. Let the result from
\fastRRAlg{} be \[ A = \{(1, 2), (2, 3), (3, 1), (7, 1), (10, 3), (11, 3),
(14, 2)\}.\] Then, $\rsStore$ keeps the following bins, together with $\phi$
values:
\begin{center}
  \small
  \setlength{\tabcolsep}{10pt}
  \begin{tabular}{r c c c c}
    \textsf{\bfseries Bin}:& $\rsStore_i[0]$ & $\rsStore_i[1]$ & $\rsStore_i[2]$ & $\rsStore_i[3]$ \\
    \midrule
    Slot $i = 1$ & $\{3\}_{\phi=3}$ & $\emptyset_{\phi=3}$ & $\{7\}_{\phi=7}$ & $\emptyset_{\phi=7}$ \\
    Slot $i = 2$ & $\{1\}_{\phi=1}$ & $\emptyset_{\phi=1}$ & $\emptyset_{\phi=1}$ & $\{14\}_{\phi=14}$ \\
    Slot $i = 3$ & $\{2\}_{\phi=2}$ & $\emptyset_{\phi=2}$ & $\{10, 11\}_{\phi=11}$ & $\emptyset_{\phi=11}$\\
    \bottomrule
  \end{tabular}
\end{center}

From this construction, the following claims can be made:

\begin{claim}
  \label{claim:size-of-bin}
  The expected size of the bin $\rsStore_i[t]$ is $\expct{|\rsStore_i[t]|} \leq
  1$.
\end{claim}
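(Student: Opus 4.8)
The plan is to compute $\expct{|\rsStore_i[t]|}$ directly via linearity of expectation, summing the inclusion probabilities over precisely those ages $j$ that the binning rule assigns to bin $t$ of slot $i$. Writing $|\rsStore_i[t]|$ as the sum of indicators $\mathbf{1}[(j,i) \in A]$ over the relevant $j$ (where $A = \fastRRAlg(X,s)$), and invoking the preceding lemma, which gives $\prob{(j,i) \in A} = \frac{1}{s}\cdot p^{(s)}_{-j}$, we obtain
\[
  \expct{|\rsStore_i[t]|} = \sum_{j \,:\, 2^{t-1} < \lceil j/s \rceil \leq 2^t} \frac{1}{s}\cdot p^{(s)}_{-j}.
\]
The first step is to turn the ceiling-based membership condition into a clean range of ages $j$, splitting into two regimes.

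For bin $t = 0$, the condition $\tfrac12 < \lceil j/s\rceil \le 1$ forces $\lceil j/s \rceil = 1$, i.e., $1 \le j \le s$. On this range $p^{(s)}_{-j} = \min(1, s/j) = 1$, so the sum has exactly $s$ terms each equal to $\tfrac{1}{s}$, giving $\expct{|\rsStore_i[0]|} = 1$ exactly. For $t \ge 1$, both $2^{t-1}$ and $2^t$ are positive integers, so the condition is equivalent to $2^{t-1}s < j \le 2^t s$. Every such age exceeds $s$, hence $p^{(s)}_{-j} = s/j$ and each inclusion probability collapses to $\tfrac{1}{s}\cdot\tfrac{s}{j} = \tfrac1j$. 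The expected size is then the partial harmonic sum $\sum_{j = 2^{t-1}s + 1}^{2^t s} \tfrac1j$, which I would bound above by the integral $\int_{2^{t-1}s}^{2^t s} \tfrac{dx}{x} = \ln 2 < 1$. Combining the two regimes yields $\expct{|\rsStore_i[t]|} \le 1$ for every $t$.

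This is essentially a bookkeeping calculation rather than a deep argument; the only places demanding care are correctly decoding the ceiling condition into a contiguous age range (with $t = 0$ handled apart from $t \ge 1$, since $2^{-1}$ is not an integer, so the lower endpoint there is $j \ge 1$ rather than $j > s/2$) and noticing that the clamp in $\min(1, s/j)$ is active precisely in bin $0$, which is why that bin attains the bound with equality while all later bins come in strictly below $\ln 2$. No individual step poses a genuine obstacle.
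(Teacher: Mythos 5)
Your proof is correct, and it rests on the same foundation as the paper's---linearity of expectation over the ages that the binning rule assigns to $\rsStore_i[t]$, with per-age inclusion probability $\frac{1}{s}\cdot p^{(s)}_{-j}$ supplied by the preceding lemma---but the two arguments bound the resulting sum differently. The paper multiplies the number of ages in the bin, $s\cdot 2^{t-1}$, by the single largest inclusion probability, $\frac{1}{s}\cdot\frac{s}{s2^{t-1}+1} \leq \frac{1}{s\cdot 2^{t-1}}$, giving the bound $1$ in one line; you instead evaluate each term exactly as $\frac{1}{j}$ (for $t \geq 1$) and compare the harmonic sum $\sum_{j=s2^{t-1}+1}^{s2^t}\frac{1}{j}$ against $\int_{s2^{t-1}}^{s2^t}\frac{dx}{x} = \ln 2$. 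Your route buys a sharper constant ($\ln 2 < 1$ for every $t \geq 1$, and exactly $1$ for $t = 0$), and your explicit separation of $t=0$ from $t \geq 1$ is in fact more careful than the paper's uniform treatment: the paper's count $s(2^t - 2^{t-1}) = s\cdot 2^{t-1}$ evaluates to $s/2$ for bin $0$, whereas that bin actually holds the $s$ ages $1,\dots,s$, each landing in slot $i$ with probability exactly $\frac{1}{s}$; your case analysis handles this boundary correctly, while the paper's formula glosses over it (the claim still holds there, since the expectation is exactly $1$). Both arguments are elementary and use the same key lemma; the paper's max-term bound is shorter, yours is tighter and cleaner at $t=0$.
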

\begin{proof}
  Bin $t$ of $\rsStore_i$ is responsible for negative indices $j$ in the range
  $2^{t-1} < \lceil -j/s \rceil \leq 2^t$, for a total of $s(2^t - 2^{t-1}) =
  s\cdot 2^{t-1}$ indices. Among these indices, the index that has the highest
  probability of being sampled is $-(s2^{t-1} + 1)$, which is sampled into slot $i$
  with probability $\frac{1}{s}\cdot\frac{s}{s2^{t-1} + 1} \leq
  \frac{1}{s\cdot2^{t-1}}$.  Therefore,
  \[
    \expct{|\rsStore_i[t]|} \leq s\cdot 2^{t-1} \cdot\frac{1}{s\cdot2^{t-1}} = 1,
  \]
  which concludes the proof.
\end{proof}

\begin{claim}
  \label{claim:size-of-slot}
  The size of slot $\rsStore_i$ is expected $O(1 + \log(|X|/s))$. Furthermore, for
  $c \geq 4$, $\prob{|\rsStore_i| \leq 1 + c\log_2(|X|)} \geq 1 - |X|^{-c}$.
\end{claim}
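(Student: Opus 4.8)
The plan is to write $|\rsStore_i| = \sum_{j=1}^{|X|} \mathbf{1}[(j,i) \in A]$, where $A = \fastRRAlg(X,s)$, and reduce the claim to the same independent-Bernoulli analysis used in Lemma~\ref{lemma:fsr-count}. Ages $1, \dots, s$ are matched to distinct slots by the random permutation $\pi$, so exactly one of them lands in slot $i$, contributing a deterministic $1$. For $j > s$, the preceding lemma gives the marginal $\prob{(j,i)\in A} = \tfrac{1}{s} p_{-j}^{(s)} = \tfrac{1}{s}\cdot\tfrac{s}{j} = \tfrac{1}{j}$. Hence $|\rsStore_i| = 1 + Z$ with $Z = \sum_{j=s+1}^{|X|} Y_j$ and $Y_j = \mathbf{1}[(j,i)\in A]$. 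The expectation bound then follows by linearity alone: $\expct{|\rsStore_i|} = 1 + \sum_{j=s+1}^{|X|} \tfrac{1}{j} \le 1 + \ln(|X|/s) = O(1 + \log(|X|/s))$ (equivalently, summing the per-bin bound of Claim~\ref{claim:size-of-bin} over the $O(1+\log(|X|/s))$ bins).

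For the concentration bound, I would first establish that the $Y_j$ ($j>s$) are mutually independent, and then invoke Theorem~\ref{thm:chernoff} exactly as in Lemma~\ref{lemma:fsr-count}. Each age $j>s$ lies in a unique track $\tau$, appearing there as local age $a$ with $j = \alpha_s^{(\tau)}(a) = sa + \tau$. Within a track, \fastSingleRRAlg{} includes local age $a$ precisely when its skip process lands on $a$; since $\prob{\skipp(i)\le k} = \tfrac{k}{i+k} = 1 - \prod_{t=i+1}^{i+k}(1-p_{-t})$, the set of sampled local ages has exactly the law of \emph{independent} inclusion with local age $a$ kept with probability $1/a$. Conditioned on inclusion, the filter (kept with probability $as/\alpha_s^{(\tau)}(a)$) and the fresh uniform slot draw are independent across elements, and distinct tracks run on disjoint randomness. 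Therefore the $Y_j = \mathbf{1}[a\text{ included}]\cdot\mathbf{1}[\text{filter passes}]\cdot\mathbf{1}[\text{slot}=i]$ are mutually independent with $\prob{Y_j=1} = \tfrac{1}{a}\cdot\tfrac{as}{sa+\tau}\cdot\tfrac{1}{s} = \tfrac{1}{j}$.

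With independence secured, set $t = c\log_2|X|$. Since $\expct{Z} \le \ln(|X|/s) \le \ln|X|$ and $c \ge 4$, we have $t = c\log_2|X| \ge 4\log_2|X| > 2e\ln(2)\log_2|X| = 2e\ln|X| \ge 2e\,\expct{Z}$, so Theorem~\ref{thm:chernoff} gives $\prob{Z \ge c\log_2|X|} \le 2^{-t} = |X|^{-c}$. Because $\{|\rsStore_i| > 1 + c\log_2|X|\} \subseteq \{Z \ge c\log_2|X|\}$, this yields $\prob{|\rsStore_i| \le 1 + c\log_2|X|} \ge 1 - |X|^{-c}$, as claimed. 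I expect the main obstacle to be the independence step: unlike \simpleRRAlg{}, whose coin flips and slot draws are manifestly independent across ages, \fastRRAlg{} samples each track with a dependency-laden skip process, so the crux is the observation that \fastSingleRRAlg{}'s skip CDF reproduces exactly the product $\prod(1-p_{-t})$ of independent non-inclusions, restoring the clean independent-Bernoulli structure across local ages within a track; combined with the per-element filter/slot randomness and track independence, this lets the Chernoff argument of Lemma~\ref{lemma:fsr-count} carry over unchanged.
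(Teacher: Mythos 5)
Your proof is correct and follows essentially the same route as the paper's: decompose $|\rsStore_i|$ into a deterministic contribution of $1$ from ages $1,\dots,s$ plus independent Bernoulli indicators with success probability $1/j$ for ages $j > s$, bound the expectation by the harmonic sum, and apply Theorem~\ref{thm:chernoff} exactly as in Lemma~\ref{lemma:fsr-count}. The only difference is that you explicitly justify the mutual independence of the indicators under \fastRRAlg{} via the skip-process CDF, a step the paper merely asserts (``Because $Y_t$'s are independent''), implicitly relying on the distributional equivalence with \simpleRRAlg{}; this is a welcome tightening rather than a different approach.
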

\begin{proof}
  Let $Y_t = \onef{x_{-t}\text{ is chosen into slot $i$}}$, so $|\rsStore_i| =
  \sum_{t=1}^{|X|} Y_t$. Since $\expct{Y_t} = p^{(s)}_{-t}/s =
  \frac{1}{s}\min(1, s/t)$, we have
  \begin{align*}
    \expct{|\rsStore_i|} = \sum_{t=1}^{|X|}
    \expct{Y_t} = 1 + \sum_{t=s+1}^{|X|} \frac{1}{t} \leq 1 + \int_{t=s}^{|X|}
    \frac{dt}{t} = 1 + \ln\left(\tfrac{|X|}{s}\right),
  \end{align*}
  which proves the expectation bound. Because $Y_t$'s are independent, using an
  argument similar to the proof of Lemma~\ref{lemma:fsr-count}, we have the
  probability bound.
\end{proof}

\paragraph{Data Structuring Operations.} 
Algorithm~\ref{algo:build-and-query} shows algorithms for constructing a
binned-sample data structure and answering queries. To \makeAlg{} a
binned-sample data structure, the algorithm first arranges the entries into
groups by slot number, using a parallel semisorting algorithm, which reorders an
input sequence of keys so that like sorting, equal keys are arranged
contiguously, but unlike sorting, different keys are not necessarily in sorted
order. Parallel semisorting of $n$ elements can be achieved using $O(n)$
expected work and space, and $O(\log n)$ depth~\cite{GuShunSunBlelloch15}. It
then, in parallel, processes each slot, putting every entry into the right bin.
Moreover, it computes a $\min$-prefix, yielding $\phi(\cdot)$ for all bins.
There is not much computation within a slot, so we do it sequentially but the
different slots are done in parallel. To answer a \queryAlg{} query, the
algorithm computes, for each slot $i$, the oldest age within $X[-w\texttt{:}]$
that was assigned to slot $i$. This can be found quickly by figuring out the bin
$k$ where $w$ should be. Once this is known, it simply has to look at $\phi$ of
bin $k - 1$ and go through the entries in bin $k$. This means a query touches at
most two bins per slot.

\begin{algorithm}[tbh]
  \caption{Construction of binned-sample data structure and query}
  \label{algo:build-and-query}

  \SetKwProg{Fn}{}{:}{}

  \tcp{Below, use the convention that $\max \emptyset = -\infty$}

  \Fn{\makeAlg$(A, n, s)$}{
    \KwIn{$A$ is a sequence of reserved samples, $n$ is the length of the
      underlying stream segment $X$, and $s$ is the target sample size used to
      generate $A$.}
    \KwOut{an instance of binned-sample structure $\rsStore(A)$}
    Use semisorting to arrange $A$ into $G_1, G_2, \dots, G_s$ by
    slot number \\
    \For{$i = 1, \dots, s$~\textbf{\upshape in parallel}}{
      Create bins $\rsStore_i[0], \dots, \rsStore_i[\beta]$, $\beta =
      \lceil \log_2(\lceil n/s \rceil) \rceil$\\
      \ForEach{$(j, \texttt{\string_}) \in G_i$}{
        Write $j$ into $\rsStore_i[k]$, where
        $2^{k-1} < \lceil j/s \rceil \leq 2^k$ }
      Let $\phi(\rsStore_i[0]) = \max \rsStore_i[0]$\\
      \tcp{prefix max}
      \For{$k = 1, \dots, \beta$}{
        $\phi(\rsStore_i[k]) \gets \max(\phi(\rsStore_i[k-1]), \max \rsStore_i[k])$
      }
    } 
    \Return{$\rsStore$}
  }
  \Fn{\queryAlg$(\rsStore, q, w)$}{
    \KwIn{$\rsStore$ is a binned-sample structure, $q$ is the number of samples
      desired, $w$ tells the algorithm to draw sample from $X[-w:]$.}
    \KwOut{a $q$-permutation of $X[-w:]$}
    \For{$i = 1, \dots, q$~\textbf{\upshape in parallel}}{
      Let $k$ be such that $2^{k-1} < \lceil w/s \rceil \leq 2^k$\\
      $\gamma \gets \max\{ j \in \rsStore_i[k] \mid j \leq w \}$ \tcp{The oldest
        that is at least as young as $w$}
      $r_i \gets \max(\gamma, \phi(\rsStore_i[k-1]))$
    }

    \Return{$(r_1, r_2, \dots, r_q)$}
  }
\end{algorithm}

\paragraph{Cost Analysis.} We now analyze \fastRRAlg{}, \makeAlg{}, and
\queryAlg{} for their work and parallel depth.  More concretely, the following
claims are made:

\begin{claim}
  By storing $T_0, T_i$'s, and $T'_i$'s as simple arrays, \fastRRAlg$(X, s)$
  runs in expected $O(s + s\log \tfrac{|X|}s)$ work and $O(\log |X|)$ parallel
  depth.
\end{claim}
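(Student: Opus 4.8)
The plan is to account separately for the three parts of \fastRRAlg{}: the preamble that builds $\pi$ and $T_0$, the parallel \textbf{for}-loop over the $s$ tracks, and the final merge of the per-track results. The preamble is cheap: drawing a random permutation of $[s]$ costs $O(s)$ work and $O(\log s)$ depth, and forming $T_0$ adds $O(s)$ work at $O(1)$ depth. The heart of the argument is the loop, and here I would first record the structural fact that each track $\subsegX^{(\tau)} = \createViewAlg(X,\tau)$ has length at most $\lceil |X|/s \rceil$ and is accessed in $O(1)$ per lookup without ever being materialized.

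For the work bound, observe that the cost of processing track $\tau$ is dominated by the number of \textbf{while}-loop iterations of \fastSingleRRAlg{} on that track, since the subsequent filter (by \coinFlip{}) and map (by \uniformSample{}) touch each retained element only $O(1)$ times. Applying Lemma~\ref{lemma:fsr-count} with $n = |\subsegX^{(\tau)}| \le \lceil |X|/s \rceil$ gives expected iteration count $O(1 + \log(|X|/s))$ per track. Summing over all $s$ tracks and adding the $O(s)$ preamble yields expected work $O(s + s\log(|X|/s))$, as claimed.

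The depth bound is where the real care is needed, and it is the step I expect to be the main obstacle. The difficulty is that \fastSingleRRAlg{} is inherently sequential inside a track---each iteration chains $i \gets i + \skipp(i)$---so a track's depth equals its iteration count, and because the $s$ tracks run concurrently, the loop's depth is the \emph{maximum} iteration count over all tracks. An expectation bound on a single track does not control this maximum, so I would instead invoke the high-probability tail of Lemma~\ref{lemma:fsr-count}: taking $m = |X| \ge |\subsegX^{(\tau)}|$ and any $c \ge 4$ gives $\prob{T_{\FSR}(\subsegX^{(\tau)}) \ge 1 + c\log_2 |X|} \le |X|^{-c}$ for each fixed track. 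A union bound over the $s \le |X|$ tracks then shows that \emph{every} track terminates within $1 + c\log_2 |X| = O(\log |X|)$ iterations with probability at least $1 - |X|^{-(c-1)}$, i.e., \whp. Hence the loop has depth $O(\log |X|)$ \whp; the elementwise filter and map contribute only $O(\log\log |X|)$ further depth (for compaction via prefix sum, if the arrays are kept dense), which is absorbed.

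Finally I would merge $T_0, T''_1, \dots, T''_s$ into a single array. Since the per-track sizes are now known, a prefix sum over these $s+1$ sizes produces write offsets in $O(s)$ work and $O(\log s)$ depth, after which the blocks are copied in parallel; the copy work is proportional to the total output size, already bounded by $O(s + s\log(|X|/s))$. Combining the three parts gives expected work $O(s + s\log(|X|/s))$ and depth $O(\log |X|)$ \whp, completing the argument. Everything except the depth of the parallel loop follows from summing the expectation bound and standard prefix-sum bookkeeping; it is precisely the sequentiality of the per-track sampler that forces the use of the tail bound plus a union bound.
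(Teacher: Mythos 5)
Your proposal is correct and takes essentially the same route as the paper's own proof: the work bound comes from summing the expectation bound of Lemma~\ref{lemma:fsr-count} over the $s$ tracks (plus the $O(s)$ permutation preamble), and the depth bound comes from the high-probability tail of that same lemma combined with a union bound over tracks, exactly as the paper does. The additional bookkeeping you supply (explicit track lengths, prefix-sum merging of $T_0, T''_1, \dots, T''_s$) only makes explicit what the paper treats implicitly.
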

\begin{proof}
  Generating a permutation of $[s]$ can be done in $O(s)$ work and $O(\log s)
  \leq O(\log |X|)$ depth. There are $s$ tracks, each, in parallel, calling
  \fastSingleRRAlg{}, which is a sequential algorithm. Once $T_\tau$ is known,
  the remaining steps are simple \lstinline{map} and \lstinline{filter}
  operations, which have $O(|T_\tau|)$ work and $O(O(|T_\tau|) \leq O(\log |X|)$
  depth. Therefore, the dominant cost is \fastSingleRRAlg{}. By
  Lemma~\ref{lemma:fsr-count}, each track performs $O(1 + \ln(|X|/s))$ work in
  expectation. Summing across $s$ tracks, the total work is expected $O(s +
  s\log(|X|/s))$. In terms of parallel depth, by Lemma~\ref{lemma:fsr-count},
  each track finishes after $1 + 4\log_2(|X|)$ iterations with probability at
  least $1 - |X|^{-4}$. Applying the union bound, we have that the expected
  depth overall is at most $O(\log |X|)$ provided that $|X| \geq s$.
\end{proof}

\begin{claim}
  \makeAlg$(A, n, s)$ runs in $O(s + s\log \tfrac{n}{s})$ work and $O(\log n)$
  parallel depth.
\end{claim}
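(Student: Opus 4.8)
The plan is to bound work and depth phase by phase, following the structure of \makeAlg{} in Algorithm~\ref{algo:build-and-query}: (a)~the semisorting step that groups $A$ into $G_1, \dots, G_s$ by slot number, and (b)~the parallel loop over slots that allocates bins, distributes entries into them, and computes the prefix-max yielding $\phi(\cdot)$. Throughout, I will use the fact (established for \simpleRRAlg{}, whose distribution \fastRRAlg{} reproduces) that the total number of reserved samples satisfies $\expct{|A|} = O(s + s\log(n/s))$, together with Claim~\ref{claim:size-of-slot}, which gives each slot expected size $O(1 + \log(n/s))$ and the tail bound $|\rsStore_i| \le 1 + c\log_2 n$ with probability at least $1 - n^{-c}$ for $c \ge 4$.

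For the work bound, the semisorting step processes $|A|$ entries whose keys (slot numbers) lie in $[s] \subseteq [n]$, so by Theorem~\ref{thm:semisorting} it costs $O(|A|)$ expected work. In the parallel loop, slot $i$ allocates $\beta + 1 = O(1 + \log(n/s))$ bins, places each of its $|G_i|$ entries into the correct bin in $O(1)$ time each (computing the bin index $k$ from $j$ takes constant time), and runs a prefix-max over its $\beta+1$ bins in $O(\beta)$ time. Summing over the $s$ slots, the bin-allocation and prefix-max contribute a deterministic $O(s\beta) = O(s + s\log(n/s))$, while entry placement contributes $O(\sum_i |G_i|) = O(|A|)$. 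Taking expectations and substituting $\expct{|A|} = O(s + s\log(n/s))$ yields total expected work $O(s + s\log(n/s))$.

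For the depth bound, the semisorting step has depth $O(\log |A|) \le O(\log n)$ \whp{} by Theorem~\ref{thm:semisorting}. The loop runs the $s$ slots in parallel; since the work \emph{within} a slot is carried out sequentially, the per-slot depth is $O(\beta + |G_i|)$, dominated by the prefix-max chain and the entry-placement loop, and hence the loop's depth is $O(\beta + \max_i |G_i|)$. Here $\beta = O(\log(n/s)) \le O(\log n)$ is deterministic, and the quantity to pin down is $\max_i |G_i| = \max_i |\rsStore_i|$. Applying the tail bound of Claim~\ref{claim:size-of-slot} with $c \ge 4$ and a union bound over the $s \le n$ slots, every slot satisfies $|\rsStore_i| \le 1 + c\log_2 n$ with probability at least $1 - s\cdot n^{-c} \ge 1 - n^{-(c-1)}$, so $\max_i |G_i| = O(\log n)$ \whp{}. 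Combining the two phases gives $O(\log n)$ depth.

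The main obstacle is precisely this depth argument for the parallel loop: because entry placement and the prefix-max are done sequentially inside each slot, a single overfull slot would dominate the depth, so I must rule that out using the \emph{high-probability} maximum-slot-size bound rather than merely the expected slot size. Everything else---the constant-time bin-index computation, the deterministic count of bins per slot, and the reuse of the $\expct{|A|}$ bound to close the work sum---is routine.
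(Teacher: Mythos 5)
Your proof is correct and follows essentially the same route as the paper's: bound the semisorting phase via Theorem~\ref{thm:semisorting} and the expected size of $A$, bound the per-slot work via Claim~\ref{claim:size-of-slot}'s expectation, and bound the depth via Claim~\ref{claim:size-of-slot}'s tail bound. In fact you are somewhat more careful than the paper, which leaves the union bound over the $s$ slots and the deterministic $O(s\beta)$ cost of bin allocation and prefix-max implicit; your explicit treatment of both closes small gaps in the published argument without changing its substance.
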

\begin{proof}
  Let $\psi = s + O(s\log(n/s))$. The algorithm starts with a semisorting step,
  which takes $O(|A|)$ work and $O(\log |A|)$ depth to arrange the entries of
  $A$ into $G_1, \dots G_2$. Since $|A|$ is expected $\psi$ but never exceeds
  $n$. This step takes $O(\psi)$ work and $O(\log n)$ depth. For each $i = 1,
  \dots, s$, the size of $G_i$ is expected $\frac{1}{s}\psi$
  (Claim~\ref{claim:size-of-slot}). Therefore, the work performed for each $i$
  is expected $\frac{1}{s}\psi$, for a total of $\psi$ across all $s$ slots in
  expectation. Because by Claim~\ref{claim:size-of-slot}, the size of a $G_i$
  exceeds $1 + 4\log_2(n)$ with probability at most $1/n^4$. The overall depth
  is at most $O(\log n)$ \whp.
\end{proof}
\begin{lemma}
  \label{lemma:cost-of-query}
  \queryAlg$(\rsStore, q, t)$ runs in $O(q)$ work and $O(\log n)$ parallel
  depth, where $n$ is the length of $X$ on which $\rsStore$ was built.
\end{lemma}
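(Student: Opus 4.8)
The plan is to bound the cost slot-by-slot and then combine across the $q$ parallel iterations of \queryAlg. Fix a slot $i$. The loop body does three things: it computes the bin index $k$ with $2^{k-1} < \lceil w/s \rceil \le 2^k$, i.e.\ $k = \lceil \log_2 \lceil w/s \rceil \rceil$, which is an $O(1)$ arithmetic step; it computes $\gamma = \max\{ j \in \rsStore_i[k] \mid j \le w\}$ by examining the entries of a single bin; and it takes one further $\max$ against the precomputed quantity $\phi(\rsStore_i[k-1])$, which is an $O(1)$ lookup since $\phi$ is stored with each bin. Thus the only nontrivial step is scanning the one bin $\rsStore_i[k]$, and the whole analysis reduces to bounding the size of that bin.

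For the work bound, I would invoke Claim~\ref{claim:size-of-bin}, which gives $\expct{|\rsStore_i[k]|} \le 1$. Hence computing $\gamma$ for slot $i$ costs $O(1)$ work in expectation, and together with the two $O(1)$ steps the per-slot work is $O(1)$ in expectation. Although the $q$ slots run in parallel, their work adds up, so summing over the $q$ iterations yields $O(q)$ expected work overall.

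For the depth bound, the slots are processed independently in parallel, so the depth equals the maximum over slots of the per-slot depth, which is in turn dominated by the scan of $\rsStore_i[k]$. Done sequentially, this scan has depth proportional to $|\rsStore_i[k]|$, so here I need a \emph{high-probability} upper bound on the bin size rather than merely its expectation---this is the main obstacle, since the expectation bound of Claim~\ref{claim:size-of-bin} is not by itself enough to control the worst of $q$ parallel branches. The resolution is to note that a bin is contained in its slot, $|\rsStore_i[k]| \le |\rsStore_i|$, and to apply the tail estimate of Claim~\ref{claim:size-of-slot}: for $c \ge 4$, $|\rsStore_i| \le 1 + c\log_2 n$ with probability at least $1 - n^{-c}$. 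Taking a union bound over all $q \le s \le n$ slots, every bin examined has size $O(\log n)$ \whp, so each per-slot scan---and therefore the entire query---has depth $O(\log n)$ \whp. (A parallel reduction within each bin would reduce this to $O(\log\log n)$, but this is unnecessary given the target bound of $O(\log n)$.)
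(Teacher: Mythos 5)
Your proposal is correct and follows essentially the same route as the paper's own proof: expected per-slot work $O(1)$ via Claim~\ref{claim:size-of-bin} summed over the $q$ slots, and depth controlled by bounding each examined bin by its slot and invoking the tail bound of Claim~\ref{claim:size-of-slot} with a union bound. If anything, you are slightly more careful than the paper, which states the expected work as $O(s)$ rather than the tighter $O(q)$ and leaves the bin-within-slot containment and union bound implicit.
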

\begin{proof}
  The algorithm looks into $q$ slots in parallel. For each slot $\rsStore_i$, it
  takes $T_i = O(1 + |\rsStore_i[k]|)$ sequential time, which is expected $O(1)$
  by Claim~\ref{claim:size-of-bin}. Hence, the total work is expected $O(s)$.
  Then, it follows from Claim~\ref{claim:size-of-slot} that $T_i$ exceeds $1 +
  4\log_2(n)$ with probability at most $1/n^4$, so the overall depth is at most
  $O(\log n)$ \whp.
\end{proof}

\subsection{Handling Minibatch Arrival}
This section describes how to incorporate a minibatch into our data structure to
maintain a sliding window of size $W$. Assume that the minibatch size is $n_i
\le W$. If not, we can only consider its $W$ most recent elements. When a
minibatch arrives, retired sampled elements must be removed and the remaining
sampled elements are ``downsampled'' to maintain the correct distribution.

Remember that the number of selected elements is $O(s + s\log (W/s))$ in
expectation, so we have enough budget in the work bound to make a pass over them
to filter out retired elements. Instead of revisiting every element of the
window, we apply the process below to the selected elements to maintain the
correct distribution. Notice that an element at age $i$ was sampled into slot
$\ell$ with probability $\frac{1}{s}p^{(s)}_{-i}$. A new minibatch will cause
this element to shift to age $j$, $j > i$, in the window. At age $j$, an element
is sampled into slot $\ell$ with probability $\frac{1}{s}p^{(s)}_{-j}$. To
correct for this, we flip a coin that turns up heads with probability
${p^{(s)}_{-j}}/{p^{(s)}_{-i}} \leq 1$ and retain this sample only if the coin
comes up heads.

Therefore, \lstinline{insert}$(B_i)$, $|B_i| = n_i$ handles a minibatch arrival
as follows:
\begin{enumerate}[label=Step~\roman*:, leftmargin=2.5\parindent, topsep=2pt]
\item Discard and downsample elements (above) in $\rsStore$; the index shifts by
  $n_i$.
\item Apply \fastRRAlg{} on $B_i$, truncated to the last $W$ elements if $n_i >
  W$.
\item Run \makeAlg{} on the result of \fastRRAlg{} with a modification where the
  it appends to an existing $\rsStore$ as opposed to creating a new structure.
\end{enumerate}

Overall, this leads to the following cost bound for \lstinline{insert}:
\begin{lemma}
  \label{lemma:cost-of-insert}
  \lstinline{insert} takes $O(s + s\log(W/s))$ work and $O(\log W)$ depth.
\end{lemma}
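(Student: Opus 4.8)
The plan is to bound the cost of \lstinline{insert} by accounting for each of its three steps separately, using the cost results already established for \fastRRAlg{} and \makeAlg{}, together with the expected-size bounds from Claims~\ref{claim:size-of-bin} and~\ref{claim:size-of-slot}. First I would handle Step~ii and Step~iii directly by invoking the earlier claims. Step~ii applies \fastRRAlg{} to $B_i$, whose length is at most $W$; by the cost analysis of \fastRRAlg{}, this contributes $O(s + s\log(W/s))$ expected work and $O(\log W)$ depth. Step~iii runs the \makeAlg{} procedure (modified to append rather than rebuild), which by the corresponding claim also costs $O(s + s\log(W/s))$ work and $O(\log W)$ depth; the appending modification does not change the asymptotic cost, since each new entry is routed to its bin independently and the $\phi$ prefix-maxima can be recomputed (or extended) per slot in time proportional to the number of bins, i.e.\ $O(\log(W/s))$ sequentially per slot, done in parallel across slots.

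The main work is in analyzing Step~i, the discard-and-downsample pass over the currently stored samples. The key observation is that the number of stored elements is $O(s + s\log(W/s))$ in expectation by Claim~\ref{claim:size-of-slot} (summed over the $s$ slots), so a pass over all of them stays within the stated work budget. For each stored element I would shift its age by $n_i$, discard it if it falls outside the window of size $W$, and otherwise flip the downsampling coin with probability $p^{(s)}_{-j}/p^{(s)}_{-i} \le 1$ as described in the text, retaining it only on heads. Each such operation is $O(1)$, so the total expected work for Step~i is $O(s + s\log(W/s))$. For the depth, I would note that these operations are independent across elements and can be performed in parallel; the only sequential dependency is within a slot when recomputing $\phi$, which is $O(\log(W/s)) \le O(\log W)$. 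To keep the work bound as a genuine (not merely expected) bound where needed, I would invoke the high-probability size bound of Claim~\ref{claim:size-of-slot}: each slot has at most $1 + 4\log_2 W$ elements with probability at least $1 - W^{-4}$, and a union bound over the $s \le W$ slots gives depth $O(\log W)$ \whp{}.

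The hard part will be making sure the downsampling in Step~i genuinely preserves the target distribution $\frac1s p^{(s)}_{-j}$ after the shift, but this is essentially justified in the surrounding text: an element stored at age $i$ was present with probability $\frac1s p^{(s)}_{-i}$, and multiplying by the retention probability $p^{(s)}_{-j}/p^{(s)}_{-i}$ yields exactly $\frac1s p^{(s)}_{-j}$, which matches what \fastRRAlg{} would produce on the shifted window. I would remark that this coin is well-defined since $p^{(s)}_{-j} \le p^{(s)}_{-i}$ for $j > i$ (the ratio never exceeds $1$), and that retired elements—those shifting past age $W$—are simply dropped. Combining the three steps, the dominant costs add to $O(s + s\log(W/s))$ work and $O(\log W)$ depth, completing the proof.
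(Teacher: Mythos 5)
Your proposal is correct and follows essentially the same route as the paper's proof: decompose \lstinline{insert} into its three steps, bound Steps~ii and~iii by the already-established costs of \fastRRAlg{} (on input of length at most $W$) and \makeAlg{} (with $n \le W$), and bound Step~i by observing that a single pass over the stored samples---whose expected number is $O(s + s\log(W/s))$---fits in the same budget. The extra details you supply (the high-probability depth bound via Claim~\ref{claim:size-of-slot} with a union bound, and the distribution-preservation remark) are consistent with the paper's surrounding text and only strengthen the argument.
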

\begin{proof}
  The cost of discarding and downsampling elements in the existing $\rsStore$ is
  no more than the cost of running \makeAlg{} with $n = W$ because the cost of
  the former is upper-bounded by the cost of going through every element in
  $\rsStore$ once. Hence, this step takes $O(s + s\log(W/s))$ work and $O(\log
  W)$ depth. When \fastRRAlg{} is run, it is run with input length $\min(W,
  n_i)$. Thus, it performs no more work than allowed by the lemma. Finally,
  \makeAlg{} is called with $n \leq W$, costing $O(s + s\log(W/s))$ work and
  $O(\log W)$ depth, which is also the total cost of \lstinline{insert}.
\end{proof}

\subsection{Total Cost of \sworvar{}}
The cost of handling a minibatch's arrival is given by
Lemma~\ref{lemma:cost-of-insert}, and the cost of answering a query is given by
Lemma~\ref{lemma:cost-of-query}. Altogether, this proves
Theorem~\ref{thm:main-var-win}.
Furthermore, across $t$ minibatches, the total work is
\[
  O\left( t + \sum_{i=1}^t (s + s\ln(W/s)) \right).
\]


\section{Parallel Sampling from an Infinite Window}
\label{sec:infinite}
\newcommand{\expec}{\expct}

This section addresses sampling without replacement from the infinite window,
which includes all elements seen so far in the stream. This is formulated as the
\sworinf~task: For each time $i = 1, \ldots, t$, maintain a random sample of size
$\min\{s, N_i\}$ chosen uniformly without replacement from $\stream_i$.

While \sworinf~can be reduced to the sliding-window setting, by setting the
window size to the number of elements received so far, in this section, we show
that there is an algorithm for infinite window that is simpler and more
efficient. We further show it to be work optimal, up to constant factors.

For $p,q \in [r]$, let $\hyper(p,q,r)$ be the \emph{hypergeometric random
  variable}, which can take an integer value from $0$ to $\min\{p,q\}$. Suppose
there are $q$ balls of type 1 and $(r-q)$ balls of type 2 in an urn. Then,
$\hyper(p,q,r)$ is the number of balls of type $1$ drawn in $p$ trials, where in
each trial, a ball is drawn at random from the urn without replacement. It is
known that $\expct{\hyper(p,q,r)} = \frac{pq}{r}$.

\paragraph{Work Lower Bound.}
We first show a lower bound on the work of any algorithm for \sworinf,
sequential or parallel, by considering the expected change in the sample output
after a new minibatch is received.
\begin{lemma}
\label{lem:sworinf-lb}
Any algorithm that solves \sworinf~ must have expected work at least
\[\Omega\left(t + \sum_{i=1}^t \min\left\{n_i, \frac{sn_i}{N_i}\right\}\right)\]
over $t$ timesteps.
\end{lemma}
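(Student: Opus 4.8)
The plan is to attribute the two terms of the bound to two unavoidable costs: reading each incoming minibatch (the $t$ term) and physically installing freshly-arrived elements into the maintained sample (the summation). To capture the latter, I would introduce, for each time $i$, the random variable $D_i$ equal to the number of elements of $B_i$ that appear in the sample maintained immediately after $B_i$ is processed. The crux of the argument is that an element of $B_i$ can appear in the output sample only if the algorithm has accessed it, and since the elements of $B_i$ are revealed for the first time at step $i$, every such access is a unit of work chargeable to step $i$. Because the minibatches partition the stream, these accesses are distinct across different $i$, so in every run of the algorithm the total work is at least $\sum_{i=1}^t D_i$; it is also at least $t$, since each minibatch must be touched at least once.

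The next step is to compute $\expct{D_i}$. The sample maintained after time $i$ is, by the specification of \sworinf{}, a uniform sample without replacement of size $\min\{s, N_i\}$ drawn from the $N_i$ elements of $\stream_i$, of which exactly $n_i$ belong to $B_i$. Hence $D_i$ is hypergeometric, $D_i \sim \hyper(\min\{s, N_i\}, n_i, N_i)$ (equivalently, by linearity of expectation each of the $n_i$ new elements lies in the sample with probability $\min\{s, N_i\}/N_i$). Using $\expct{\hyper(p,q,r)} = pq/r$, this gives $\expct{D_i} = \frac{\min\{s, N_i\}\cdot n_i}{N_i} = \min\{n_i, \tfrac{s n_i}{N_i}\}$, where the final equality is a one-line case split on whether $N_i \le s$ or $N_i > s$.

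Combining the pieces finishes the proof. Taking expectations of the pointwise inequality $\mathrm{work} \ge \sum_{i=1}^t D_i$ yields $\expct{\mathrm{work}} \ge \sum_{i=1}^t \expct{D_i} = \sum_{i=1}^t \min\{n_i, \tfrac{s n_i}{N_i}\}$, while trivially $\expct{\mathrm{work}} \ge t$. Applying $\max\{a,b\} \ge (a+b)/2$ to these two lower bounds gives $\expct{\mathrm{work}} = \Omega\!\left(t + \sum_{i=1}^t \min\{n_i, \tfrac{s n_i}{N_i}\}\right)$, as required.

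The main obstacle is making rigorous the claim that the algorithm must perform work at least $D_i$ to emit those elements. I would argue this by observing that the output sample is a collection of element values/identities, and a value cannot appear in the output unless it has been read into memory; such a read costs at least one unit of work and, because $B_i$ first becomes available at step $i$, cannot be amortized onto any earlier step. Care is needed to phrase this as a per-outcome (pointwise) inequality before taking expectations, and to confirm that reads of elements from different minibatches are never double-counted — which holds precisely because the minibatches are disjoint segments of the stream. This is the only place where a careful model-level argument, rather than routine calculation, is required.
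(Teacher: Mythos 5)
Your proposal is correct and follows essentially the same route as the paper's proof: both charge work to the newly arrived elements that must appear in the maintained sample, compute this expectation via the hypergeometric distribution (your $\hyper(\min\{s,N_i\}, n_i, N_i)$ formulation neatly unifies the paper's case split on $N_i \le s$ versus $N_i > s$), and add an $\Omega(1)$ per-minibatch cost for examining each minibatch whose size is unknown in advance. Your treatment is somewhat more careful about the pointwise-before-expectation step, while the paper spells out more explicitly why skipping a minibatch is impossible (its size could be $\Omega(N_i)$, forcing sampled elements from it), but these are refinements of the same argument.
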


\begin{proof}
  First consider the number of elements that are sampled from each minibatch. If
  $N_i \le s$, then the entire minibatch is sampled, resulting in a work of
  $\Omega(n_i)$. Otherwise, the number of elements sampled from the new
  minibatch $B_i$ is $\hyper(s, n_i, N_i)$. The expectation is
  $\expec{\hyper(s,n_i,N_i)} = \frac{s \cdot n_i}{N_i}$, which is a lower bound
  on the expected cost of processing the minibatch. Next, note that any
  algorithm must pay $\Omega(1)$ for examining minibatch $B_i$, since in our
  model the size of the minibatch is not known in advance. If an algorithm does
  not examine a minibatch, then the size of the minibatch may be as large as
  $\Omega(N_i)$, causing $\Omega(1)$ elements to be sampled from it. The
  algorithm needs to pay at least $\Omega(t)$ over $t$ minibatches. Hence, the
  total expected work of any algorithm for \sworinf~ after $t$ steps must be
  $\Omega\left(t + \sum_{i=1}^t \min\{n_i, \frac{sn_i}{N_i}\}\right)$.
\end{proof}

\paragraph{Sequential Algorithm for \sworinf.}
We present a simple sequential algorithm for \sworinf, whose work matches the lower bound from Lemma~\ref{lem:sworinf-lb}. It uses a subroutine for sampling without replacement from a static array. 

\begin{observation}[\cite{Ahrens85,Vitt87}]
\label{obs:seq-swor}
There is an algorithm for choosing a random sample of size $s$ without replacement from a (static) array of size $r$ using $O(s)$ work. 
\end{observation}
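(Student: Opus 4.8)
The plan is to realize the sampler by a \emph{truncated Fisher--Yates (Durstenfeld) shuffle}, which produces a uniformly random $s$-permutation of the array in exactly $s$ iterations, each costing $O(1)$. Writing the array as $A = \ssequence{A_1, \dots, A_r}$, I would iterate $k = 1, 2, \dots, s$, and at step $k$ draw $j \gets \uniformSample(k, r)$ and swap $A_k$ with $A_j$; the returned sample is the prefix $A[1\texttt{:}s+1]$. Since each step performs a single random draw and a single swap, the total work is $O(s)$, with no dependence on $r$. Note this delivers the stronger guarantee of a uniform $s$-\emph{permutation}, which is convenient because several callers in this paper want ordered samples rather than unordered sets.

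For correctness I would maintain the invariant that after step $k$ the prefix $A[1\texttt{:}k+1]$ is a uniformly random $k$-permutation of the $r$ original entries. The base case is immediate, and for the inductive step, conditioned on the first $k-1$ choices, step $k$ selects one of the $r-k+1$ not-yet-placed elements uniformly at random and fixes it into position $k$; multiplying the per-step conditional probabilities shows that every one of the $r!/(r-s)!$ ordered $s$-subsets is produced with equal probability $\tfrac{(r-s)!}{r!}$. This is exactly sampling without replacement, so the algorithm is correct.

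The one point that needs care --- and the main obstacle --- is the requirement that the array be \emph{static}: an in-place shuffle overwrites $A$, and a full reinitialization would cost $\Theta(r)$, destroying the $O(s)$ bound. I would sidestep this by never physically modifying $A$ and instead keeping a hash table $h$ that records only the positions touched so far, so that reading logical position $p$ returns $h[p]$ when present and $A[p]$ otherwise, while a swap updates at most two entries of $h$. Because the loop runs $s$ times and touches $O(s)$ distinct positions, the table holds $O(s)$ entries, yielding $O(s)$ expected work and space under constant-time hashing and leaving $A$ untouched; alternatively, if $A$ may be temporarily modified, one records the $s$ swaps and undoes them in reverse for a deterministic $O(s)$ bound. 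Once this trick removes the $\Theta(r)$ overhead, the remaining correctness and cost claims are the textbook analysis of the partial shuffle, recovering the cited results of Ahrens--Dieter and Vitter.
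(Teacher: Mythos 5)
Your construction is correct, but it takes a genuinely different route from the one the paper relies on: the paper gives no proof of this observation at all, deferring entirely to the cited works of Ahrens--Dieter and Vitter, which are \emph{skip-based sequential samplers}. Those algorithms produce the $s$ selected indices in increasing order by repeatedly drawing the random \emph{gap} to the next selected index (each gap in $O(1)$ expected time, via the appropriate order-statistic distribution and rejection/inversion tricks), so the whole sorted sample costs $O(s)$ expected work and never reads, let alone modifies, the unselected entries. Your approach instead runs a truncated Fisher--Yates shuffle on top of a sparse, hash-table shadow of the array, and the argument is sound: the prefix invariant gives a uniform $s$-permutation, and the dictionary correctly eliminates the $\Theta(r)$ initialization while touching only $O(s)$ positions, for $O(s)$ expected work under constant-time hashing (or deterministic $O(s)$ in your undo variant, at the price of temporarily mutating $A$). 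The trade-offs are worth noting. Your method delivers a uniformly random $s$-\emph{permutation}, which is strictly stronger than an unordered sample and is the flavor of output this paper's sliding-window structure also aims for; it needs auxiliary hashing, and its work bound is expected rather than worst-case. The skip-based methods deliver the sample in sorted index order (useful for sequential scans over external data), need no dictionary, and are likewise $O(s)$ in expectation; both therefore satisfy the observation as stated, since the paper's cost accounting is in expectation throughout.
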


The idea in a solution to \sworinf, described in Algorithm~\ref{algo:sworinf-seq}, is as follows. When a minibatch $B_i$ arrives, a random variable $\kappa$ is generated according to the hypergeometric distribution to determine how many of the $s$ samples need to be chosen from $B_i$, rather than minibatches that arrived before $B_i$. The algorithm then chooses a random sample of size $\kappa$ without replacement from $B_i$, and updates the sample $S$ accordingly.

\begin{algorithm}
Initialization: Sample $S \gets \emptyset$\;
\tcp{$n_i = |B_i|$ and $N_i = \sum_{j=1}^i n_i$}
\If{minibatch $B_i$ is received}
{
   \If{$N_i \le s$}    
   {
     $S \gets S \cup B_i$  \tcp{Store the entire minibatch}
   }
   \Else{
       Let $\kappa$ be a random number drawn from $\hyper(s, n_i, N_i)$\;
       
       Let $S_i$ be a set of $\kappa$ elements sampled without replacement from $B_i$\; 
       
       Replace $\kappa$ randomly chosen elements in $S$ with $S_i$.\;
   }
}
\caption{Work-Optimal Sequential Algorithm for \sworinf.}
\label{algo:sworinf-seq}
\end{algorithm}

\begin{lemma}
\label{lem:sworinf-seq}
Algorithm~\ref{algo:sworinf-seq} is a sequential solution to \sworinf~ with work $O\left(t + \sum_{i=1}^t \min\{n_i, \frac{sn_i}{N_i}\} \right)$ to process $t$ minibatches $B_1,\ldots,B_t$. This work is optimal given the lower bound from Lemma~\ref{lem:sworinf-lb}.
\end{lemma}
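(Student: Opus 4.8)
The plan is to prove Lemma~\ref{lem:sworinf-seq} in two parts: first establish correctness (that Algorithm~\ref{algo:sworinf-seq} actually maintains a uniform sample without replacement from $\stream_i$), and then establish the work bound, which is the quantitatively interesting claim since it matches the lower bound of Lemma~\ref{lem:sworinf-lb}.

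For correctness, I would argue by induction on $i$ that $S$ is a uniform sample of size $\min\{s, N_i\}$ drawn without replacement from $\stream_i$. The base case is handled by the $N_i \le s$ branch, which simply stores everything. For the inductive step when $N_i > s$, the key fact is that in a uniform sample without replacement of $s$ items from $\stream_i = \stream_{i-1} \cup B_i$, the number of sampled items that happen to come from $B_i$ is exactly hypergeometrically distributed as $\hyper(s, n_i, N_i)$ (choosing $s$ items from an urn of $N_i$, where $n_i$ are ``type 1''). Conditioned on this count being $\kappa$, the $\kappa$ items from $B_i$ are a uniform without-replacement subsample of $B_i$, and the remaining $s - \kappa$ come uniformly from the old sample. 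I would verify that drawing $\kappa \sim \hyper$, sampling $\kappa$ fresh elements from $B_i$, and replacing $\kappa$ uniformly-chosen incumbents in $S$ reproduces exactly this conditional structure, so the updated $S$ is uniform on $\stream_i$. The subsample of $B_i$ uses the static-array subroutine of Observation~\ref{obs:seq-swor}.

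For the work bound, I would account for the cost of each minibatch $B_i$. Examining $B_i$ to learn $n_i$ costs $\Omega(1)$, contributing the $t$ term. When $N_i \le s$, the cost of absorbing the whole minibatch is $O(n_i)$, matching $\min\{n_i, sn_i/N_i\} = n_i$ in that regime. When $N_i > s$, the work is dominated by drawing $\kappa$, sampling $\kappa$ elements from $B_i$ via Observation~\ref{obs:seq-swor} in $O(\kappa)$ work, and replacing $\kappa$ elements of $S$ in $O(\kappa)$ work; thus the per-minibatch cost is $O(1 + \kappa)$. Taking expectations and using $\expct{\kappa} = \expct{\hyper(s, n_i, N_i)} = sn_i/N_i$ gives expected work $O(1 + sn_i/N_i)$ for that minibatch. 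Summing over all $t$ minibatches yields the claimed $O\!\left(t + \sum_{i=1}^t \min\{n_i, \tfrac{sn_i}{N_i}\}\right)$, which coincides termwise with the lower bound in Lemma~\ref{lem:sworinf-lb}, establishing optimality.

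**The main obstacle** is the correctness argument rather than the work bound: one must be careful that generating $\kappa$ hypergeometrically and then performing uniform subsampling of $B_i$ together with uniform replacement in $S$ genuinely factors the target distribution the right way. The subtle point is that ``replace $\kappa$ randomly chosen elements of $S$'' must preserve uniformity of the retained $s-\kappa$ old elements as a without-replacement sample of $\stream_{i-1}$, and this relies on the incumbents in $S$ already being exchangeable by the inductive hypothesis. Once exchangeability is in hand, the factorization of the hypergeometric-conditioned distribution makes the verification routine, and I would note that Observation~\ref{obs:seq-swor} silently assumes an efficient generator for $\hyper(s, n_i, N_i)$, which is standard and computable in $O(1)$ or $O(\kappa)$ time.
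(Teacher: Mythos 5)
Your proposal is correct, and its work analysis is essentially the paper's own: the same case split on $N_i \le s$ versus $N_i > s$, per-minibatch cost $O(1+\kappa)$, expectation $O\left(1 + \frac{sn_i}{N_i}\right)$ via $\expct{\hyper(s,n_i,N_i)} = \frac{sn_i}{N_i}$, summed over the $t$ minibatches and matched against Lemma~\ref{lem:sworinf-lb}. The paper's proof addresses only this work bound and leaves distributional correctness implicit, so your inductive argument (hypergeometric marginal for the count drawn from $B_i$, uniform subsampling conditioned on that count, exchangeability of the incumbent sample) is a sound supplement rather than a genuinely different route; the one small inaccuracy is attributing hypergeometric generation to Observation~\ref{obs:seq-swor} --- the paper instead invokes a dedicated $O(1)$-time generator~\cite{Stadlober90} for that step.
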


\begin{proof}
If $N_i \le s$, then the arriving minibatch $B_i$ is added to the sample in its entirety, for a total work of $\Theta(n_i)$. If $N_i > s$, then the algorithm has to pay for (1)~generating a random variable according to the hypergeometric distribution, which can be done in $O(1)$ time by an algorithm such as described in~\cite{Stadlober90}, (2)~selecting a sample of size $\kappa \le n_i$ from $B_i$, which can be done in $O(\kappa)$ time, from Observation~\ref{obs:seq-swor}, and (3)~replacing $\kappa$ randomly chosen elements from $S$ -- this can be done in $O(\kappa)$ time by choosing $\kappa$ random elements without replacement from $S$ using Observation~\ref{obs:seq-swor}, and overwriting these locations with the new samples. The overall time for processing this batch is $O(1+ \kappa) = O(1+ \hyper(s,n_i,N_i))$. The expected time for processing the minibatch is $O(\expec{1+ \hyper(s,n_i,N_i)}) = O(1+ \frac{s \cdot n_i}{N_i})$. Overall, the cost of processing $B_i$ is $O(1+ \min\{n_i, \frac{s \cdot n_i}{N_i}\})$, and the total work of the algorithm is $O\left(t + \sum_{i=1}^t \min\{n_i, \frac{sn_i}{N_i}\} \right)$. 
\end{proof}

\paragraph{Parallel Algorithm for \sworinf.}
\label{sec:sworinf-par}
Our solution is presented in Algorithm~\ref{algo:sworinf-par}. The main idea is
as follows: When a minibatch $B_i$ arrives, generate a random variable $\kappa$
according to the hypergeometric distribution to determine how many of the $s$
samples will be chosen from $B_i$, as opposed to prior minibatches. Then,
choose a random sample of size $\kappa$ without replacement from
$B_i$ and update the sample $S$ accordingly.
We leverage Sanders et~al.~\cite{Sanders18}'s recent algorithm for parallel
sampling without replacement (from static data), restated below in
the work-depth model:

\begin{observation}[\cite{Sanders18}]
\label{obs:swor-par}
There is a parallel algorithm to draw $s$ elements at random without replacement from $N$ elements using $O(s)$ work and $O(\log s)$ depth.
\end{observation}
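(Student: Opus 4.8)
Because this statement merely restates a result of Sanders et~al.~\cite{Sanders18}, the first plan is simply to invoke their algorithm and verify that its cost bounds carry over verbatim to our work-depth model; nothing in their construction relies on features outside the concurrent-read, arbitrary-write PRAM assumed here. For completeness, however, I would give a self-contained sketch based on divide-and-conquer over the index range, driven by the hypergeometric distribution $\hyper$ already introduced for the sequential algorithm.

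The recursion operates on subproblems of the form ``draw $k$ elements without replacement from a contiguous subrange of length $r$.'' At the root, $k = s$ and $r = N$. A node splits its range into two halves of length $r/2$, draws $\kappa \sim \hyper(k, r/2, r)$ to decide how many of its $k$ samples fall in the left half, and then recurses on $(\text{left}, \kappa)$ and $(\text{right}, k-\kappa)$ in parallel. Correctness is immediate by induction on the range length: conditioning on how many samples land in each half and recursing reproduces exactly the uniform without-replacement distribution, since the count landing in the left half is by definition hypergeometric. The recursion is cut off once $k$ drops below a constant threshold, at which point the node finishes sequentially using the $O(k)$-work static sampler of Observation~\ref{obs:seq-swor}; a node with $k = 0$ returns immediately.

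For the cost bounds I would argue as follows. Each internal node performs a single hypergeometric draw, which takes $O(1)$ time using a constant-time variate generator such as~\cite{Stadlober90}. Since every active leaf accounts for at least one sample, there are $O(s)$ active leaves and hence $O(s)$ nodes overall, so the total work is $O(s)$ including the sequential base cases. For depth, the key point is that the number of samples in a node halves in expectation at each level, since $\expct{\kappa} = k/2$, so after $O(\log s)$ levels every surviving node is below the constant threshold. The main obstacle is upgrading this expectation statement into an $O(\log s)$ bound on the \emph{deepest} root-to-leaf path: I would apply a concentration inequality (in the spirit of Theorem~\ref{thm:chernoff}, using that the hypergeometric concentrates at least as sharply as the corresponding binomial) to show that any fixed path remains heavy for $\Omega(\log s)$ extra levels only with probability $\mathrm{poly}(1/s)$, and then union-bound over the at most $s$ active paths. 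Note this bounds depth by $O(\log s)$ rather than the naive $O(\log N)$ precisely because termination is triggered by the sample count $k$, not the range length $r$. This depth argument, together with the reliance on $O(1)$-time hypergeometric sampling, is the delicate part; correctness and the work bound are routine by comparison.
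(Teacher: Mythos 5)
Your first plan coincides exactly with the paper's treatment: the paper gives no proof of Observation~\ref{obs:swor-par} at all---it is imported from~\cite{Sanders18} and merely restated in the work-depth model---so invoking that result and checking that it requires nothing beyond the concurrent-read, arbitrary-write assumptions is precisely what the paper does, and it suffices.

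Your supplementary self-contained sketch is in the right spirit (hypergeometric splitting of the sample count is indeed the core of the cited algorithm), but its work analysis has a genuine gap. From ``there are $O(s)$ active leaves'' you conclude ``hence $O(s)$ nodes overall,'' and that inference fails because your recursion terminates on the sample count $k$, not on the range length. A node with $k \geq 2$ samples can send \emph{all} of them to one child; the other child is an empty leaf, and such all-to-one-side splits can repeat for up to roughly $\log_2(N/k)$ consecutive levels, each a perfectly legal hypergeometric outcome. The internal nodes of these chains are not charged to any sample-carrying leaf, so the node count---and with it your work bound---is not $O(s)$ in the worst case; even a single chain from the root contributes $\Omega(\log(N/s))$ nodes, which can dwarf $s$ when $N$ is huge. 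The statement is salvageable in expectation: given $k \geq 2$ samples in a range of length $r$, the probability that all of them land in one half is at most $2\binom{r/2}{k}/\binom{r}{k} \leq 2^{1-k} \leq 1/2$, so each maximal unbalanced chain has expected length $O(1)$, and since there are at most $s-1$ branching nodes, the expected node count is $O(s)$. Alternatively---and this is closer to what \cite{Sanders18} actually does---recurse on a \emph{fixed} balanced tree with $O(s)$ leaves covering the index range, so the tree has deterministically $O(s)$ nodes and depth $O(\log s)$; split the count hypergeometrically down this tree and run the sequential sampler of Observation~\ref{obs:seq-swor} at each leaf on its assigned count. Then the only probabilistic claim left is the whp bound on the maximum leaf load, which is exactly where your concentration-plus-union-bound argument belongs. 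One further subtlety for that step: the count at a node is hypergeometric only marginally, so to union-bound over nodes at a given depth you should argue over the full (unpruned) splitting tree, using that a node can be active only if its own marginal count exceeds the threshold; union-bounding over ``active paths'' quietly conditions on the event you are trying to bound.
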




Our algorithm uses static parallel sampling without replacement in two places:
once to sample new elements from the new minibatch, and then again to update the
current sample. In more detail, when a minibatch arrives, the algorithm
\textbf{(i)}~chooses $\kappa$, the number of elements to be sampled from $B_i$,
in $O(1)$ time; \textbf{(ii)}~samples $\kappa$ elements without replacement from
$B_i$ in parallel; and \textbf{(iii)}~replaces $\kappa$ randomly chosen elements
in $S$ with the new samples using a two-step process, by first choosing the
locations in $S$ to be replaced, followed by writing the new samples to the
chosen locations. Details appear in Algorithm~\ref{algo:sworinf-par}.

\begin{algorithm}
Initialization: Sample $S \gets \emptyset$\;

\If{minibatch $B_i$ is received}
{
   \tcp{Recall $n_i = |B_i|$ and $N_i = \sum_{j=1}^i n_i$}
   \lIf{$N_i \le s$}{
     Copy $B_i$ into $S$ in parallel 
   }
   \Else{
       Let $\kappa$ be a random number generated by $\hyper(s, n_i, N_i)$\;
       
       $S_i \gets$ $\kappa$ elements sampled without replacement from $B_i$ (Obs.~\ref{obs:swor-par})\;
       
       $R_i \gets$ $\kappa$ elements sampled without replacement from $\{1,\ldots,s\}$ (Obs.~\ref{obs:swor-par})\;
       
       \lFor{$j = 1$ to $\kappa$}{Replace $S[R_i[j]] \gets S_i[j]$}
   }
}
\caption{Parallel Algorithm for \sworinf.}
\label{algo:sworinf-par}
\end{algorithm}

\begin{theorem}
\label{thm:sworinf-par}
Algorithm~\ref{algo:sworinf-par} is a work-efficient algorithm for \sworinf. The
total work to process $t$ minibatches $B_1,\ldots,B_t$ is $O\left(t +
  \sum_{i=1}^t \min\{n_i, \frac{sn_i}{N_i}\} \right)$ and the parallel depth of
the algorithm for processing a single minibatch is $O(\log s)$. This work is
optimal up to constant factors, given the lower bound from
Lemma~\ref{lem:sworinf-lb}.
\end{theorem}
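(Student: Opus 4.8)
The plan is to establish correctness and the two complexity bounds separately, then invoke the lower bound of Lemma~\ref{lem:sworinf-lb} to conclude optimality. For correctness, I would argue by induction on the number of minibatches that after processing $B_i$, the sample $S$ is a uniform sample of size $\min\{s, N_i\}$ drawn without replacement from $\stream_i$. The base case is the regime $N_i \le s$, where the entire stream is retained and the claim is trivial. For the inductive step with $N_i > s$, the key observation is the standard decomposition of without-replacement sampling: a uniform size-$s$ sample from $\stream_i = \stream_{i-1} \cup B_i$ can be generated by first drawing $\kappa \sim \hyper(s, n_i, N_i)$ to decide how many sampled elements come from $B_i$, then sampling $\kappa$ elements without replacement from $B_i$ and $s - \kappa$ from $\stream_{i-1}$. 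Since $S$ already holds a uniform size-$s$ sample of $\stream_{i-1}$ by the inductive hypothesis, choosing $\kappa$ uniformly random positions in $S$ to overwrite (via $R_i$) leaves a uniform size-$(s-\kappa)$ subsample of $\stream_{i-1}$ in the untouched slots, and filling those positions with the fresh draws $S_i$ from $B_i$ yields exactly the desired distribution.

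For the work bound, I would mirror the accounting in Lemma~\ref{lem:sworinf-seq}. Processing $B_i$ costs $O(1)$ to generate $\kappa$ (hypergeometric sampling in constant time, as cited), plus $O(\kappa)$ work to draw $S_i$ from $B_i$ and $O(\kappa)$ to draw $R_i$ and perform the $\kappa$ replacements, both invoking Observation~\ref{obs:swor-par}. Thus the work for minibatch $i$ is $O(1 + \kappa)$, and taking expectations with $\expct{\kappa} = \expct{\hyper(s,n_i,N_i)} = \tfrac{s n_i}{N_i}$ gives $O(1 + \tfrac{s n_i}{N_i})$; combined with the $N_i \le s$ regime (work $O(n_i)$) this is $O(1 + \min\{n_i, \tfrac{s n_i}{N_i}\})$ per minibatch, summing to the claimed total. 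Since this matches Lemma~\ref{lem:sworinf-lb} up to constants, work-optimality follows immediately.

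For the depth bound, each of the two without-replacement draws (steps for $S_i$ and $R_i$) contributes $O(\log s)$ depth by Observation~\ref{obs:swor-par}, the final replacement loop is an embarrassingly parallel scatter of $\kappa \le s$ writes with $O(1)$ depth under arbitrary-winner concurrent writes (the indices $R_i$ are distinct since they are sampled without replacement, so no contention arises), and the copy in the $N_i \le s$ branch is also $O(1)$ depth. Hence the overall depth per minibatch is $O(\log s)$. The main subtlety I would watch is the correctness argument: one must verify that overwriting $\kappa$ \emph{uniformly random} positions of an existing uniform sample—rather than re-sampling from scratch—genuinely preserves the uniform-without-replacement distribution over $\stream_i$. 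This reduces to checking that the retained $s-\kappa$ old elements, conditioned on $\kappa$, are a uniform without-replacement sample of $\stream_{i-1}$, which holds precisely because $R_i$ is chosen uniformly without replacement from $\{1,\dots,s\}$ independently of the contents of $S$.
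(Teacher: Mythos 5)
Your proposal is correct, and its cost and depth accounting coincides with the paper's proof almost line for line: the same split into the cases $N_i \le s$ and $N_i > s$, the same $O(1+\kappa)$ work per minibatch with $\expct{\kappa} = \frac{s n_i}{N_i}$ via Observation~\ref{obs:swor-par}, the same $O(1)$-depth replacement scatter, and the same appeal to Lemma~\ref{lem:sworinf-lb} for optimality. The one genuine difference is scope: the paper's proof is purely a cost analysis and never argues distributional correctness---neither here nor in the sequential Lemma~\ref{lem:sworinf-seq}---whereas you supply the inductive argument that drawing $\kappa$ from $\hyper(s,n_i,N_i)$, overwriting $\kappa$ uniformly chosen slots of $S$, and refilling them with a without-replacement draw from $B_i$ preserves uniformity over $\stream_i$. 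That addition is sound (the untouched slots indeed hold a uniform $(s-\kappa)$-subsample of $\stream_{i-1}$ precisely because $R_i$ is chosen without replacement and independently of the contents of $S$), so your write-up fills in a correctness step the paper takes for granted rather than taking a different route.
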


\begin{proof}
When a new minibatch $B_i$ arrives, for the case case $N_i \le s$, copying $n_i$ elements from $B_i$ to $S$ can be done in parallel in $O(n_i)$ work and $O(1)$ depth, by organizing array $S$ so that the empty locations in the array are all contiguous, so that the destination for writing an element can be computed in $O(1)$ time.

For the case $N_i > s$, random variable $\kappa$ can be generated in $O(1)$ work. The next two steps of sampling $\kappa$ elements from $B_i$ and from $\{1,\ldots,n\}$ can each be done using $O(\kappa)$ work and $O(\log \kappa)$ depth, using Observation~\ref{obs:swor-par}. The final for loop of copying data can be performed in $O(\kappa)$ work and $O(1)$ depth. Hence, the expected total work for processing $B_i$ is $1 + \min\{n_i, \frac{sn_i}{N_i}\}$, and the depth is $O(\log \kappa)$. Added up over all the $t$ minibatches, we get our result. Since $\kappa \le s$, the parallel depth is $O(\log s)$. 
\end{proof}


\subsection{Fixed Length Sliding Window}
\label{sec:fixed}
In the {\bf fixed length sliding window} model with window size $w$, the sample is desired from the $w$ most recent elements, where the window size $w \gg s$ is known in advance. While it is more restrictive than the variable length sliding model, this model is relevant when the aggregation function and their parameters are known in advance, as is often the case in a streaming system with long-lived continuous queries. Algorithms for this model are simpler than those for variable length sliding window.\\


\noindent {\bf Task \sworfixed:} For each time $i = 1 \ldots t$, after observing minibatch $B_i$, maintain a random sample of size $\min\{s, N_i\}$ chosen uniformly without replacement from the $w$ most recent elements in the stream.
We first present a lower bound on the work of any algorithm for \sworfixed.
\begin{lemma}
For any algorithm for \sworfixed, the expected work to process minibatch $i$ is at least $\Omega\left(1+ \min\left\{s, \frac{s \cdot n_i}{w}\right\} \right)$. The expected work to process $t$ minibatches is at least $\Omega\left(t + s \sum_{i=1}^t \min\left\{1, \frac{n_i}{w}\right\}\right)$.
\end{lemma}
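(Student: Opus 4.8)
The plan is to mirror the structure of the infinite-window lower bound (Lemma~\ref{lem:sworinf-lb}), replacing the role of the growing prefix length $N_i$ with the fixed window size $w$. The key observation is that, once $B_i$ has arrived, the window of the $w$ most recent elements contains exactly $\min\{n_i, w\}$ elements drawn from $B_i$ together with $w - \min\{n_i, w\}$ older elements. Since correctness demands that the reported sample be a uniform size-$s$ sample without replacement from this window, the number of output elements originating in $B_i$ is distributed exactly as $\hyper(s, \min\{n_i, w\}, w)$, regardless of how the algorithm maintains its internal state. Its expectation is therefore $\frac{s\cdot\min\{n_i, w\}}{w} = \min\{s, \frac{s\,n_i}{w}\}$.

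Next I would argue that each such fresh element forces $\Omega(1)$ work during step $i$. A concrete element of $B_i$ does not exist anywhere in the stream before time $i$, so it cannot appear in any sample reported earlier; if it appears in the sample at time $i$, the algorithm must have touched and written it while processing $B_i$. Hence the work spent on $B_i$ is at least the number of reported elements originating in $B_i$, and taking expectations yields expected work at least $\min\{s, \frac{s\,n_i}{w}\}$. As in Lemma~\ref{lem:sworinf-lb}, I then add the unavoidable $\Omega(1)$ per minibatch: because $n_i$ is not known in advance, an algorithm that skips $B_i$ entirely risks missing an $\Omega(1)$-in-expectation contribution to the sample when $n_i$ is large, so it must examine each minibatch. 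Combining the two contributions gives the per-minibatch bound $\Omega(1 + \min\{s, \frac{s\,n_i}{w}\})$.

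Finally, summing over the $t$ minibatches and using the identity $\min\{s, \frac{s\,n_i}{w}\} = s\cdot\min\{1, \frac{n_i}{w}\}$ gives the stated aggregate bound $\Omega(t + s\sum_{i=1}^t \min\{1, \frac{n_i}{w}\})$. The main obstacle is the middle step: pinning down that the count of freshly-sampled elements is genuinely hypergeometric and independent of the algorithm's internals. This rests entirely on the output-distribution requirement of \sworfixed{} rather than on any particular maintenance strategy. The edge regime $N_i \le s$, where the window is not yet full, should be checked separately: there the whole stream is the sample, giving work $\Omega(n_i)$, which dominates $\frac{s\,n_i}{w}$ since $w \gg s$, so the claimed bound still holds.
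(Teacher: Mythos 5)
Your proposal is correct and takes essentially the same route as the paper's own proof: lower-bound the per-minibatch work by the expected number of reported sample elements that must originate in $B_i$, which is $\min\bigl\{s, \frac{s\, n_i}{w}\bigr\}$, and add an unavoidable $\Omega(1)$ per minibatch because $n_i$ is not known in advance, so every minibatch must be examined. Your write-up is in fact slightly more explicit than the paper's---naming the hypergeometric distribution, justifying why freshly sampled elements force $\Omega(1)$ work each, and checking the not-yet-full-window regime---but the underlying argument is the same.
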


\begin{proof}
After minibatch $B_i$ is received, the random sample must contain $s$ elements randomly chosen from the $w$ most recent elements in $\stream_i$. 
In expectation, the number of elements that will be chosen from $B_i$ is
(1)~If $n_i \ge w$, then $s$, since the entire window is contained within mini-batch $B_i$, and
(2)~If $n_i < w$, then $s \frac{n_i}{w}$, since $s$ elements are chosen out of a total of $w$ elements.
Since the batch size $n_i$ is not known in advance, the work to process the new batch is at least $\Omega(1)$, since the batch has to be examined. If the batch were not examined, it is possible that $n_i \ge w$, and the sample will no longer be correct.
The work to process $B_i$ is at least $\Omega(1 + \min\left\{s, \frac{s \cdot n_i}{w}\right\})$, and the result for $t$ batches follows. \qed
\end{proof}

\paragraph{Algorithm for \sworfixed:} We first consider a sequential algorithm for \sworfixed, which follows by combining our algorithm for \sworinf~with a reduction (due to Braverman et al.~\cite{BOZ09}) from the problem of sampling from a fixed-size sliding window to the problem of sampling from an infinite window. 

This reduction is based on partitioning the stream into ``buckets'', each with a fixed starting and ending point. Each bucket has a width of $w$ elements, where $w$ is the size of the sliding window. For $i=1,2,\ldots$, the $i$th bucket consists of the stream elements at positions $((i-1)\cdot w+1)$ till $i\cdot w$. Note that buckets are distinct from minibatches, which could be of arbitrary sizes. A bucket is also different from a query window, whose boundary need not coincide with a bucket boundary. The algorithm maintains a sample without replacement of size $s$ within each bucket $i$, using the infinite window algorithm. Each per-bucket sample is completed when the bucket ends, and a new sample is started for the next bucket. A bucket that overlaps with the current sliding window is called ``active''. Only active buckets are retained and the rest are discarded. Clearly, there are no more than two active buckets at any time. We call the older of the active buckets as the ``old'' bucket (\oldbkt) and the more recent active bucket as the ``new'' bucket (\newbkt). It is possible that there is no old bucket, if the current window is exactly aligned with the new bucket. We have a sample of size $s$ from \oldbkt, called \oldsmp, and a sample of size $s$ from \newbkt, called \newsmp. Except for corner cases when the new bucket has seen less than $s$ elements, we have that the sizes of both \oldsmp~ and \newsmp~ equal $s$. When a sample is desired from the sliding window, there are two cases. (1)~If the window overlaps only with \newbkt, then return \newsmp. (2)~If the window overlaps partially with \oldbkt~ and partially with \newbkt, then the algorithm selects as many elements from \oldbkt that are not expired yet -- say this is of size $s' \le s$. It then selects the remaining $(s-s')$ elements by (uniformly) subsampling from \newsmp. It is shown in~\cite{BOZ09} that this constitutes a random sample chosen without replacement from a sliding window of size $w$. 

\paragraph{Parallel Algorithm for \sworfixed.}
We now show how to implement this reduction in a parallel algorithm. We note that the focus of \cite{BOZ09} was on the space complexity, while we are interested in the time complexity and work of the sampling algorithm. For maintaining the sample without replacement of size $s$ within each bucket, we use one instance of the parallel algorithm for \sworinf~(Algorithm~\ref{algo:sworinf-par}) for each bucket. When queried, the algorithm uses $O(s)$ work to combine the two sampled.

\begin{lemma}
\label{lem:sworfixed-par}
There is a parallel algorithm for \sworfixed~ whose total work for processing $t$ minibatches is $O\left(t + s \left(1+ \frac{N_t}{w}\right) \log \frac{w}{s} \right)$. Its parallel depth for processing a single minibatch is $O(\log s)$.
\end{lemma}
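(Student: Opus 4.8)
The plan is to analyze the parallel bucketed reduction just described and inherit its cost from the per-bucket guarantees of Theorem~\ref{thm:sworinf-par}. Concretely: partition the stream into consecutive buckets of width $w$, run one instance of the parallel \sworinf{} algorithm (Algorithm~\ref{algo:sworinf-par}) inside each bucket, retain only the at most two active buckets (\oldbkt{} and \newbkt{}), and combine their samples at query time in $O(s)$ work (correctness of this combine is the reduction of~\cite{BOZ09}). Since the lemma concerns only the cost of processing minibatches --- the $O(s)$ query combine being separate --- I would establish a work bound over $t$ minibatches and a depth bound for a single minibatch for \sworfixed.

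For the work bound, I would first bound the number of buckets ever created by $1+\lceil N_t/w\rceil$, since each bucket occupies a distinct width-$w$ block of absolute positions. Next, fix one bucket and view the portions of the global minibatches that land in it as its own sub-stream of total length at most $w$, with sub-minibatch sizes $m_1,\dots,m_k$ and cumulative counts $M_j=\sum_{l\le j}m_l$. Applying the \sworinf{} work bound (Theorem~\ref{thm:sworinf-par}) to this sub-stream gives work $O\bigl(k+\sum_j\min\{m_j,\,sm_j/M_j\}\bigr)$. The key calculation is that the sampling term is $O(s(1+\log(w/s)))$: the prefix where $M_j\le s$ contributes entire sub-minibatches and sums to $O(s)$, while for $M_j>s$ one uses $sm_j/M_j = s\,(M_j-M_{j-1})/M_j \le s\ln(M_j/M_{j-1})$, which telescopes to $s\ln(w/s)$. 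Summing over all buckets, the sampling terms total $O\bigl((1+N_t/w)\,s\log(w/s)\bigr)$, and the overhead $\sum_\beta k_\beta$ counts (bucket, minibatch) incidences, which is $O(t+N_t/w)$; absorbing $N_t/w$ into the sampling term (valid since $w\gg s$ forces $s\log(w/s)\ge 1$) yields the claimed $O\bigl(t+s(1+N_t/w)\log(w/s)\bigr)$.

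For the depth bound I would first discard expired data: if $n_i\ge w$, then all but the last $w$ elements of $B_i$ are immediately expired, so after this truncation each minibatch intersects at most two active buckets. Its elements split into at most two contiguous sub-minibatches in $O(1)$ depth, and the two buckets are updated in parallel, each by a single \sworinf{} minibatch update of depth $O(\log s)$ (Theorem~\ref{thm:sworinf-par}, using $\kappa\le s$). Hence processing a single minibatch has depth $O(\log s)$, which together with the work bound proves the lemma.

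I expect the main obstacle to be the bookkeeping that keeps the depth at $O(\log s)$ rather than the telescoping itself: without truncating a very large minibatch, it would straddle $\Theta(n_i/w)$ buckets and forking across them would add $\Theta(\log(n_i/w))$ depth, so the argument hinges on observing that only the at most two buckets overlapping the final length-$w$ window need to be touched per minibatch, even though the global work accounting still charges $O(1+N_t/w)$ buckets overall.
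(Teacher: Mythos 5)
Your proof is correct and follows the same overall plan as the paper's: charge the work bucket-by-bucket using the \sworinf{} guarantee, multiply by the $O(1+N_t/w)$ buckets, add $O(1)$ overhead per minibatch, and obtain depth $O(\log s)$ from updating the at most two active buckets. Where you genuinely differ is in how the per-bucket bound $O(s+s\log(w/s))$ is established. The paper argues informally that the worst case occurs when every minibatch has size $1$ (``the larger the minibatches, the closer the work is to $O(s)$'') and then evaluates the harmonic sum $s+\sum_{i=s+1}^{w} s/i$; it never actually proves that unit minibatches are the worst case. Your telescoping step, $s\,m_j/M_j = s\,(M_j-M_{j-1})/M_j \le s\ln(M_j/M_{j-1})$, proves the bound directly for an arbitrary partition of the bucket into sub-minibatches, so it supplies exactly the justification the paper's monotonicity claim implicitly relies on. Similarly, your truncation observation for the depth bound --- that only the at most two buckets overlapping the final length-$w$ window need to be touched, so a minibatch with $n_i \gg w$ does not force forking across $\Theta(n_i/w)$ buckets --- fills in a detail the paper passes over (it simply says one determines the case in $O(1)$ steps and updates \oldbkt{} and \newbkt{}). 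Both routes yield the same bound; yours is more rigorous at these two points, the paper's is shorter.
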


\begin{proof}
For the above algorithm, consider the total work involved in processing a single bucket of $w$ elements. If the entire bucket was contained within a minibatch, the total work to process it is $\Theta(s)$ (or possibly even smaller), since no more than $s$ elements are chosen from this bucket (Observation~\ref{obs:swor-par}). Suppose that the bucket was split across two or more minibatches. The total work to maintain a sample for this bucket depends on the sizes of the minibatches that constitute this bucket -- the larger the minibatches, the closer the work is to $O(s)$. But in the worst case, each minibatch could be of size $1$, and in this case the total work is the number of times the sample changes over observing $s$ elements -- this is $s + \sum_{i=(s+1)}^w \frac{s}{i} = O(s+s\log (w/s))$. Over $t$ minibatches, the number of different buckets is $N_t/w$, which leads to a total work of $O(s N_t (\log (w/s))/w)$. We also note that $O(1)$ time is required for each minibatch. Further, in the case $N_t < w$, when a single bucket has not completed yet, we still need $O(s\log w)$ work. The parallel depth follows because for each minibatch that arrives, it is needed to determine which of Cases (1)-(3) apply here, which can be done in $O(1)$ steps, followed by updating \oldbkt~ and \newbkt. The depth of the update process for these follows from Observation~\ref{obs:swor-par} and Theorem~\ref{thm:sworinf-par}. \qed
\end{proof}

\section{Parallel Sampling with Replacement}
\label{sec:swr}

We now consider \swrinf, sampling with replacement. Intuitively, parallelizing
sampling with replacement is easier than without replacement because random
samples can be independently drawn without checking for distinctness. 

\begin{observation}
\label{obs:swr-par}
There is an $O(s)$-work $O(1)$-depth parallel algorithm for computing a sample
of size $s$ with replacement from an input array of size $n$.
\end{observation}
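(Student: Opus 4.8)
The plan is to observe that sampling with replacement requires no coordination between the individual draws, so the task is embarrassingly parallel. Concretely, I would draw the $s$ sample elements fully independently: for each $j \in [s]$ in parallel, generate a uniform index $r_j \gets \uniformSample(1, n)$ and write the sampled element $X[r_j]$ into output cell $S[j]$. Because each draw is an independent uniform choice over $[n]$, the resulting multiset $\langle X[r_1], \dots, X[r_s]\rangle$ is exactly a sample of size $s$ drawn with replacement from the input array, which establishes correctness directly from the definition.

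For the cost analysis, each parallel thread performs a single call to \uniformSample, an $O(1)$-work operation, followed by one array read and one array write, so each thread does $O(1)$ work and has $O(1)$ depth. Summing the per-thread work over the $s$ threads gives $O(s)$ total work. Since the $s$ threads are mutually independent and each has constant depth, the overall depth is $O(1)$. The writes are conflict-free because thread $j$ writes only to its own output location $S[j]$; the reads from $X$ may collide when two threads sample the same index, but these are concurrent reads, which the work--depth model of Section~\ref{sec:prelim} permits.

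The only point worth flagging is precisely why no dependency chain arises here, in contrast to the without-replacement algorithms earlier in the paper: there, enforcing distinctness among the chosen elements couples the draws and forces the more elaborate machinery of Observation~\ref{obs:swor-par}. Sampling with replacement dispenses with the distinctness constraint entirely, so there is no genuine obstacle to overcome---the independence of the draws is exactly what collapses the depth to a constant.
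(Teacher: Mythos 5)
Your proposal is correct and matches the paper's own argument exactly: the paper justifies this observation by noting that each of the $s$ sample elements can be chosen independently by drawing a uniform random index in $[1,n]$ and reading the corresponding array entry, giving $O(s)$ work and $O(1)$ depth. Your additional remarks on conflict-free writes and the contrast with the without-replacement case are sound elaborations of the same idea.
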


The above is easy to see: Each sample element can be independently chosen by
selecting a random integer in the interval $[1,n]$ and choosing the
corresponding element from the input. We will use the binomial random variable.
For integer $\alpha > 0$ and $0 < \beta \le 1$, let $\binomial(\alpha,\beta)$
be the the number of successes in $\alpha$ trials, where each trial has probability $\beta$ of success.\\


{\bf \swrinf: Sampling with replacement, infinite window.} A simple algorithm is
to run $s$ independent parallel copies of a single element stream sampling
algorithm, which is clearly correct. When minibatch $B_i$ is received, each
single element sampling algorithm decides whether or not to replace its sample,
with probability $n_i/N_i$, which can be done in $O(1)$ time. The algorithm has
$O(s)$ work per minibatch, and parallel depth $O(1)$. However, we can do better
than this as follows.

Suppose the current samples after observing $i$ minibatches are stored in an
array $S[1,2,\ldots,s]$. For each sample, the probability of it being replaced
after the minibatch arrives is $p_i = n_i/N_i$. Instead of testing each sample
separately to see if needs to be replaced, the algorithm carries out the
following steps:
\begin{enumerate}[label=(\Alph*)]
\item Generate a random number $s' \le s$ according to $\binomial(s,p_i)$ to
  determine how many of the samples need to be replaced.
\item Sample $L$, a set of $s' \le s$ elements \emph{without replacement} from the
  set $\{1,2,\ldots,s\}$, using a parallel sampling algorithm
  (Observation~\ref{obs:swor-par}).
\item For each location $\ell \in L$, sample $S[\ell]$ is overwritten with a
  randomly chosen element from $B_i$.
\end{enumerate}

Steps (A) and (B) in the above procedure can be shown to generate the same
distribution of locations as iterating through each location in $S$, and
separately determining whether the sample needs to be replaced. Step~(A) can be
performed in $O(1)$ work, 
while Step~(B) can be performed in $O(s')$ work with $O(\log s') = O(\log s)$
parallel depth. Since $\expec{s'} = sp_i = s n_i/N_i$, the expected work of this
step is $O(s n_i/N_i)$. Step~(C) can be performed in $O(s')$ work and $O(1)$
depth. Below are the overall properties of the algorithm:

\begin{theorem}
\label{thm:swrinf-par}
  There is a parallel algorithm for \swrinf{} such that for a target sample size
  $s$, the total work to process minibatches $B_1, \dots, B_t$ is $O(t +
  \sum_{i=1}^t s n_i/N_i)$, and the depth for processing any one minibatch $B_i$
  is $O(\log s)$. This work is optimal, up to constant factors.
\end{theorem}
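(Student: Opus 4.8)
The plan is to split the argument into three parts: (1)~establish that Algorithm~\ref{algo:sworinf-par} (the \swrinf{} variant using Steps~(A)--(C) above) maintains the correct joint distribution, (2)~accumulate the per-step costs already bounded in the text to get the stated work and depth, and (3)~prove a matching lower bound so that the work is optimal up to constants. The correctness argument is where I expect the real work to be; the cost accounting is then essentially bookkeeping over quantities whose expectations are already in hand.

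For correctness, I would first maintain the invariant that after processing $B_i$, each of the $s$ array entries $S[\ell]$ is an independent, uniformly random element of $\stream_i$. The inductive step for a single entry is the standard with-replacement reservoir calculation: with probability $1-p_i = N_{i-1}/N_i$ the entry is kept (uniform over $\stream_{i-1}$) and with probability $p_i = n_i/N_i$ it is overwritten by a uniform element of $B_i$, so each element of $\stream_i$ ends up with probability exactly $1/N_i$. The main obstacle is justifying that the batched Steps~(A) and (B) reproduce the distribution of independently testing each of the $s$ locations for replacement. I would argue this via the classical equivalence: the random set of replaced locations obtained by including each $\ell \in \{1,\dots,s\}$ independently with probability $p_i$ has the same law as first drawing $s' \sim \binomial(s, p_i)$ and then choosing a uniformly random $s'$-subset $L$ of $\{1,\dots,s\}$ without replacement, since conditioned on exactly $s'$ successes all $\binom{s}{s'}$ subsets are equally likely. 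Because the entries are touched independently and the overwrite values in Step~(C) are drawn independently and uniformly from $B_i$, the joint independence across locations is preserved, completing the invariant.

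For the cost, I would simply add up the per-step bounds: Step~(A) is $O(1)$ work; Step~(B) costs $O(s')$ work with $O(\log s') = O(\log s)$ depth by Observation~\ref{obs:swor-par}; and Step~(C) is $O(s')$ work with $O(1)$ depth by Observation~\ref{obs:swr-par}. Since $\expct{s'} = s\,p_i = s\,n_i/N_i$, the expected work to process $B_i$ is $O(1 + s\,n_i/N_i)$, and summing over $t$ minibatches (the $O(1)$ term accounting for examining each minibatch, whose size is unknown in advance) yields total expected work $O\!\left(t + \sum_{i=1}^t s\,n_i/N_i\right)$. The depth per minibatch is dominated by Step~(B), giving $O(\log s)$. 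Finally, for optimality I would mirror the lower-bound reasoning of Lemma~\ref{lem:sworinf-lb}: on arrival of $B_i$ each of the $s$ samples is replaced independently with probability $n_i/N_i$, so the expected number of replaced (hence rewritten) samples is $s\,n_i/N_i$, forcing $\Omega(s\,n_i/N_i)$ work; together with the $\Omega(1)$ cost of examining each minibatch this gives the lower bound $\Omega\!\left(t + \sum_{i=1}^t s\,n_i/N_i\right)$, matching the upper bound up to constant factors.
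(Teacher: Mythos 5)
Your proposal is correct and follows essentially the same route as the paper: the same Steps~(A)--(C) algorithm, the same cost accounting (expected $s' = s n_i/N_i$, depth dominated by the parallel without-replacement selection of locations), and the same optimality argument based on the expected number of sample entries that must come from the new minibatch. In fact, you supply details the paper leaves implicit---notably the conditional-uniformity argument showing that drawing $s' \sim \binomial(s, p_i)$ and then a uniform $s'$-subset reproduces independent per-location Bernoulli replacement---so your write-up is, if anything, more complete than the paper's.
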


This work bound is optimal---the expected number of elements in the sample that
change due to a new minibatch is $(s n_i/N_i)$. The theorem means the
minibatches, at least the initial few, should be large. To see why: if they are
of size $1$, the first element would require $\Theta(s)$ work, because every
sample needs to be updated! Similarly, the first $s/2$ elements each requires
$O(s)$ work, totaling $O(s^2)$ work for the initial $\Theta(s)$ elements. With
minibatches sized $\Omega(s)$, the total work decreases significantly.
Importantly, this is not an artifact of our algorithm---any algorithm for
\swrinf{} needs this cost when minibatches are small.




\section{Conclusion}
\label{sec:concl}

We presented low-depth, work-efficient parallel algorithms for the fundamental
data streaming problem of streaming sampling. Both the sliding-window and
infinite-window cases were addressed. Interesting directions for future work
include the parallelization of other types of streaming sampling problems, such
as weighted sampling and stratified sampling. 

\let\oldbibliography\thebibliography
\renewcommand{\thebibliography}[1]{%
  \oldbibliography{#1}%
  \setlength{\itemsep}{0pt}%
}
\bibliographystyle{alpha}
\bibliography{sampling}

\end{document}